\newtheorem{lemma}{Lemma}
\title{A Reduction-based Algorithm for the Clique Interdiction Problem \thanks{This paper has been accepted for publication at IJCAI 2025.}}
\author[1]{Chenghao Zhu\thanks{\href{mailto:axs7384@gmail.com}{axs7384@gmail.com}}}
\author[1]{Yi Zhou\thanks{Corresponding author: \href{zhou.yi@uestc.edu.cn}{zhou.yi@uestc.edu.cn}}}
\author[1]{Haoyu Jiang\thanks{\href{mailto:FirstSSAT@outlook.com}{FirstSSAT@outlook.com}}}
\affil[1]{University of Electronic Science and Technology of China}
\date{}
\begin{document}

\maketitle

\begin{abstract}
    The Clique Interdiction Problem (CIP) aims to minimize the size of the largest clique in a given graph by removing a given number of vertices.  The CIP models a special Stackelberg game and has important applications in fields such as pandemic control and terrorist identification. However, the CIP is a bilevel graph optimization problem, making it very challenging to solve. Recently, data reduction techniques have been successfully applied in many (single-level) graph optimization problems like the vertex cover problem. 
    Motivated by this, we investigate a set of novel reduction rules and design a reduction-based algorithm, RECIP, for practically solving the CIP.  RECIP enjoys an effective preprocessing procedure that systematically reduces the input graph, making the problem much easier to solve.  Extensive experiments on 124 large real-world networks demonstrate the superior performance of RECIP and validate the effectiveness of the proposed reduction rules.

\end{abstract}

\section{Introduction}
\subsection{Problem Background}
The maximum clique problem (MCP) is fundamental and well-studied in graph theory and combinatorial optimization. 
The size of the maximum clique is an important metric to measure the density or cohesion of graphs. A larger size of the maximum clique implies a denser graph structure \cite{borgatti2024analyzing}, often corresponding to tightly connected communities in applications such as social networks, biological networks, and signal processing~\cite{van2012friendship,dunbar1995social,malod2010maximum,douik2020tutorial}.

In many real-world applications, it is important not only to identify large cliques but also to disrupt or minimize their sizes through a process known as interdiction~\cite{dempe2020bilevel}. 
This leads to the notion \textbf{Clique Interdiction Problem (CIP)}.
The CIP is defined as follows: Given a graph $G$ and an interdiction budget value \( k \), decide at most \( k \) vertices to be removed from (, or interdicted in) the graph such that the size of the maximum clique in the remaining graph is minimized.
From the lens of application, the problem is used to identify critical nodes in various real-world networks. 
One can see that the vertices for interdiction are typically among the most important or influential in the graph, as their removal has the greatest impact on the graph structure. We list some specific applications in the following.

\paragraph{Pandemic Control} 
For epidemic control, identifying important nodes for disease spreading is a central topic in network epidemiology, as large cliques play a significant role in the spread of diseases \cite{wang2012impacts,vsikic2013epidemic}. Therefore, it is essential to identify critical nodes that can reduce the size of cliques and monitor these nodes to control the spread of epidemics \cite{valdez2023epidemic,grass2016two}.

\paragraph{Terrorist Identification}
Large cliques can be potential sources of catastrophic events, such as terrorist attacks or cyberattacks \cite{sageman2004understanding,berry2004emergent}. Cliques promote cohesion and solidarity, enabling large groups within terrorist or criminal networks to coordinate devastating actions. Therefore, monitoring and regulating the cohesiveness of terrorist networks is of critical importance.

\subsection{Related Literature}

The general problem of interdicting several vertices or edges such that the graph becomes less cohesive is receiving increasing attention.
Specific formulations include the clique interdiction problem in the paper, the minimum vertex blocker clique problem, which minimizes the number of vertices removed to ensure the maximum clique in the remaining graph is below a specified size \cite{nasirian2019exact,mahdavi2014minimum}, and the edge interdiction clique problem, which minimizes the maximum clique size after removing at most \( k \) edges \cite{furini2021branch,mattia2024reformulations}. 
In the optimization community, such problems are classified as bilevel optimization problems and two-stage stochastic optimization problems with recourse(2SPRs)~\cite{dempe2020bilevel}. 
In particular, the CIP models a special Stackelberg game where the leader interdicts a set of vertices and the follower maximizes the clique in the remaining graph \cite{xiao2014stackelberg}.
In this sense, the problem is different from another recently studied problem called the \( k \)-defective clique problem, which maximizes the maximum clique size after adding \( k \) edges \cite{luo2024faster,chen2021computing} because adding edges and maximizing the clique are not adversarial.

From the perspective of computational intractability, the CIP is challenging as the decision of maximum clique size in the remaining graph is  NP-hard~\cite{karp2010reducibility}, W[1]-hard~\cite{downey2012parameterized}, and hard to approximate~\cite{hastad1996clique}. 
In fact, the decision version of CIP has been shown to be \( \Sigma_2^p \)-complete~\cite{rutenburg1994propositional,grune2024completeness}. 
Nevertheless, there are at least two existing algorithms for solving the CIP practically and optimally. 
One method is to formulate the problem as a bilevel integer linear program (BILP), and then rely on the BILP solver to obtain the solution~\cite{becker2017bilevel,tang2016class,fischetti2017new}. 
However, this approach may be inefficient because it does not make full use of the specific structure of the CIP.
Another algorithm called CLINTER-INTER, probably the state-of-the-art algorithm for the CIP to our knowledge, recasts the problem to a normal single-level integer linear problem with an exponential number of rows~\cite{furini2019maximum}, then uses the maximum clique solver for constraint generation to solve this ILP.  
Clearly, the scale of the linear program depends on the input graph. 
Therefore, when the input graph is huge, which is often the case in real-world scenarios, efficient reduction pre-processing techniques can be useful for reducing the size of the linear program. This in turn improves the efficiency of the final algorithm.
On the other hand, the data reduction technique has been widely used for simplifying NP-hard problems, like vertex cover \cite{hespe2020wegotyoucovered}, independent set \cite{xiao2021efficient}, or cluster editing \cite{blasius2022branch}.
A nice recent review in \cite{abu2022recent} pointed out the possibility of extensive application of this technique for harder discrete problems.
The study of reduction rules and reduction-based algorithms for interdiction optimization problems like the basic CIP is still limited in existing literature.
Motivated by this gap, we propose an efficient algorithm for the CIP with a particular focus on data reduction, enabling better scalability and efficiency.

\subsection{Our Contributions}

Our contributions are mainly two-fold.

First, we investigate the data reduction techniques for the CIP. 
We propose novel reduction rules including \textit{color},  \textit{exact clique}, \textit{triangle}, \textit{interdiction}, and the \textit{domination} reduction rules, which are used to simplify the input.
These rules rely on an in-depth structural analysis and can reduce the input, i.e., the graph $G$ and the budget $k$, with an optimality guarantee. 

We secondly provide a whole reduction-based algorithm, RECIP, for solving the CIP in real-world graphs.
Based on the reduction rules, we provide a unified reduction algorithm that preprocesses the input instance.  Worst-case time complexity guarantees for each reduction step are also given.
We also provide polynomial-time algorithms for finding lower bounds that are tighter than current methods for some instances.
Finally, the RECIP integrates the reduction algorithm and the lower bound estimations into a branch-and-cut framework. 

Extensive experiments demonstrate that RECIP outperforms the state-of-the-art methods on nearly 90\% of real-world networks.
Notably, RECIP features a powerful graph reduction ability in preprocessing. A reduction of 85\% vertices is often observed for most graphs.
Some reduction steps which are not polynomial-time in the worst case (due to the invocation of the maximum clique oracle) perform still efficiently in practice.
The source codes are publicly available \footnote{\url{https://github.com/axs7385/RECIP}}.

\section{Preliminary}

Let $G = (V, E)$ be a simple finite undirected graph, where $V$ is the vertex set and $E$ is the edge set of $G$. 
When the context is clear, we use \( n \) to denote the number of vertices $|V|$ and \( m \) to denote the number of edges $|E|$ of the graph.
The \textit{open neighborhood} of a vertex $v \in V$ is the set of its adjacent vertices, $N_G(v) = \{u \mid \{u, v\} \in E\}$, and the size of the set is the degree of $v$, $d(v)=|N_G(v)|$. We further define the \textit{closed neighborhood} of a vertex $v \in V$ to be $N_G[v] = N(v) \cup \{v\}$. 
For convenience, we use $N(u)$ to denote $N_G(u)$ and $N[u]$ to denote $N_G[v]$ unless otherwise specified.

A vertex set \( C \subseteq V \) is called a \textit{clique} if every pair of distinct vertices \( u, v \in C \) satisfies \( \{u, v\} \in E \). 
The maximum clique size of $G$, which is the size of the largest clique in \( G \), is denoted by \( \omega(G) \). 
We denote \( T_c(n) \) as the time complexity of computing $\omega(G)$, where \( n \) is the number of vertices in $G$. 
The current computation of $\omega(G)$ is in fact very fast in large, sparse graphs empirically even though the best-known \( T_c(n) \) is still exponential $O^*(1.2^n)$ \cite{xiao2017exact} in theory. 
In the paper, we employ the well-performed algorithm in \cite{chang2019efficient} to compute $\omega(G)$ and find the maximum clique in the graphs.
Furthermore, a clique is referred to as a \textit{maximal clique} if it is not a proper subset of any other clique. 
Given a vertex set $S \subseteq V$, the subgraph of $G$ induced by $S$ is denoted by $G[S]$.

Given a graph $G=(V, E)$ and a nonnegative integer $k$,  \( \theta(G, k) \) represents the minimum value of the size of the maximum clique in the graph obtained by removing at most \( k \) vertices from \( G \). 
Formally,
\[
\theta(G,k)  = \min _{S \subseteq V, |S|\leq k} \omega(G[V \setminus S]).
\]
The set \( \mathcal{S}(G,k) \) refers to all subsets $S$ that obtain the optimal $\theta(G,k)$, i.e. $\mathcal{S}(G,k) =\{S|S=\arg\max_{S \subseteq V, |S|\leq k} \omega(G[V \setminus S])\}$, for a given $G$ and $k$.







\section{Reduction Rules}

In the paper, the reduction rule refers to the rule that reduces the size of the problem instance, for example, the number of vertices and edges of graph $G$, or the budget value \( k \), while preserving optimality.
Before introducing these reduction rules, we assume that a lower bound value $lb$ of $\theta(G,k)$ is known beforehand. 
We defer the methods of obtaining effective lower bounds to Section \ref{subsec-lowerbound}.

\paragraph{Exact Clique Reduction}
Our first reduction is based on a simple observation: If the size of the maximum clique containing vertex $u$ is less than  $\theta(G,k)$, then removing vertex $u$ does not affect the solution to this instance $G$ and $k$. 
As $\theta(G,k)$ is not known beforehand, we use the lower bound of $lb$ instead in the following reduction rule.


\begin{lemma}[Exact Clique Reduction]
\label{lemma_clique}
Given a graph $G=(V,E)$, an integer $k$, and an integer $lb$, if $\theta(G,k)\ge lb$ and there is a vertex $u\in V$ such that $\omega(G[N(u)]) \leq lb-2$, then $\theta(G,k)=\theta(G[V\setminus \{u\}])$. 
\end{lemma}
Proof of this lemma, as well as missing proofs in the remainder of the paper, are left in the appended file.
Suppose that \( \theta(G, k) \ge lb \). We can exhaustively remove all vertices \( u \in V \) that \( \omega(G[N(u)]) \leq lb - 2 \) based on this reduction rule. 
The time complexity of this reduction is \( O\left( \sum_{u \in V} T_c(d(u)) \right) \).
Specifically, we observe that the removal of a vertex from the graph does not make other vertices satisfy this removable condition by Lemma~\ref{lemma_2}. So we only need to compute the maximum clique size in $G[N(u)]$ for each $u$.

\begin{lemma}
\label{lemma_2}
    Given a graph $G=(V,E)$, an integer $k$, and two adjacent vertices $u,v \in V$, if $\omega(G[N(u)]) <\omega(G[N(v)])$, then $\omega(G[N(v)\setminus \{u\}])=\omega(G[N(v)])$.
\end{lemma}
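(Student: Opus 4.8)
The plan is to prove the two inequalities that together give the claimed equality. First, the inequality $\omega(G[N(v)\setminus\{u\}]) \le \omega(G[N(v)])$ is immediate, since $N(v)\setminus\{u\} \subseteq N(v)$ and the maximum clique size is monotone under passing to induced subgraphs on a smaller vertex set. All the real work therefore goes into establishing the reverse inequality $\omega(G[N(v)\setminus\{u\}]) \ge \omega(G[N(v)])$.

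For the reverse direction, I would fix a maximum clique $C$ of $G[N(v)]$ and write $w = |C| = \omega(G[N(v)])$. The argument splits on whether $u$ belongs to $C$. If $u \notin C$, then $C \subseteq N(v)\setminus\{u\}$ is already a clique of size $w$ in the smaller graph, so $\omega(G[N(v)\setminus\{u\}]) \ge w$ and this case is finished.

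The interesting case is $u \in C$, and here the plan is an exchange argument: I would replace $u$ by $v$ inside $C$, forming the set $C' = (C\setminus\{u\})\cup\{v\}$. Since $C\subseteq N(v)$, the vertex $v$ is adjacent to every vertex of $C$, so $C'$ is again a clique; moreover $v\notin N(v)$ gives $v\notin C$, hence $|C'| = w$. Crucially, every vertex of $C'$ lies in $N(u)$: the vertices of $C\setminus\{u\}$ are neighbors of $u$ because $C$ is a clique containing $u$, and $v\in N(u)$ by the adjacency hypothesis on $u$ and $v$. Thus $C'$ witnesses $\omega(G[N(u)]) \ge w = \omega(G[N(v)])$, which contradicts the hypothesis $\omega(G[N(u)]) < \omega(G[N(v)])$. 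This rules out the case $u\in C$, leaving only the case already handled, so a maximum clique of $G[N(v)]$ survives in $G[N(v)\setminus\{u\}]$.

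The step I expect to be the crux is the exchange itself, namely verifying that swapping $u$ for $v$ keeps the set a clique of the same cardinality while pushing it entirely into $N(u)$. The two facts that make this work are that $C\subseteq N(v)$ forces $v$ to be universal to $C$, and that $u$ and $v$ are adjacent; once these are in place, the size bookkeeping and the containment $C'\subseteq N(u)$ are routine, and the contradiction with the hypothesis closes the argument.
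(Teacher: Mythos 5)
Your proof is correct and follows essentially the same route as the paper's: both fix a maximum clique $C$ of $G[N(v)]$, reduce to the case $u\in C$, and perform the exchange $C'=(C\setminus\{u\})\cup\{v\}$ to exhibit a clique of size $\omega(G[N(v)])$ inside $N(u)$, contradicting the hypothesis. Your write-up is if anything slightly more explicit (noting $v\notin C$ and where adjacency of $u$ and $v$ is used), but there is no substantive difference.
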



\paragraph{Degree Reduction}
For the efficiency consideration, we hope to avoid calculating the $\omega(G[N(u)])$ for all $u\in V$ if the input graph is huge.
Hence, we propose the degree reduction rule, which is obtained from the fact that $d(u)$ is an upper bound for $\omega(G[N(u)])$. 



\begin{lemma}[Degree Reduction]
Given a graph $G=(V,E)$, an integer $k$ and an integer $lb$, if $\theta(G,k)\ge lb$ and there is a vertex $u\in V$ such that $d(u) \leq lb-2$, then $\theta(G,k)=\theta(G[V\setminus \{u\}],k)$. 
\end{lemma}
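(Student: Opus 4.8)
The plan is to obtain this statement as an immediate corollary of the Exact Clique Reduction (Lemma~\ref{lemma_clique}). The only work is to bridge the hypotheses: I would show that the degree condition $d(u) \le lb-2$ entails the neighborhood clique-size condition $\omega(G[N(u)]) \le lb-2$ that Lemma~\ref{lemma_clique} requires, after which the conclusion transfers verbatim.

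First I would record the trivial counting bound $\omega(G[N(u)]) \le d(u)$. This holds because $G[N(u)]$ is an induced subgraph on the vertex set $N(u)$, which has exactly $|N(u)| = d(u)$ vertices; any clique is a subset of this vertex set, so the largest clique inside $G[N(u)]$ can contain at most $d(u)$ vertices. This is precisely the observation that the degree upper-bounds the neighborhood clique number, which motivates the rule. Chaining this with the hypothesis $d(u) \le lb-2$ gives $\omega(G[N(u)]) \le d(u) \le lb-2$.

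At this point every precondition of Lemma~\ref{lemma_clique} is satisfied: we have $\theta(G,k) \ge lb$ by assumption, and we have just established $\omega(G[N(u)]) \le lb-2$ for the vertex $u$. Invoking that lemma directly yields $\theta(G,k) = \theta(G[V \setminus \{u\}], k)$, which is the desired equality.

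I do not anticipate a genuine obstacle here; the entire content of the proof is the elementary bound $\omega(G[N(u)]) \le d(u)$, and the optimality-preservation itself is inherited wholesale from the Exact Clique Reduction. The one point worth confirming is simply that nothing beyond this counting bound is needed --- in particular, no fresh structural analysis of $G$ is required, since the degree rule is strictly a cheap, looser sufficient condition for the exact clique rule that trades tightness for the cost of computing $\omega(G[N(u)])$.
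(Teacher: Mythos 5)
Your proposal is correct and matches the paper exactly: the paper also derives the Degree Reduction as an immediate corollary of the Exact Clique Reduction via the trivial bound $\omega(G[N(u)]) \leq d(u)$. Nothing further is needed.
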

The correctness of Lemma 3 can be directly derived from Lemma 1, using the fact that \( \omega(G[N(u)]) \leq d(u) \).

The degree reduction was also used in CLIQUE-INTER \cite{furini2019maximum}, the best-known existing solver for the CIP.
Suppose that \( \theta(G, k)\ge lb \), the time complexity of exhaustively removing all vertices \( u \in V \) that \( d(u) \leq lb - 2 \) is linear time, \( O\left( m \right) \). 
This is based on the observation that, when a vertex $u$ is removed, the degree of vertices in $N(u)$ is decreased by 1, making these vertices removable as well. 
Similar to the $k$-core computation where a linear-heap is used to maintain the vertices of different degrees \cite{chang2019cohesive}, one can obtain the linear time reduction procedure.

\paragraph{Color Reduction}

A coloring of a graph \( G \) is a mapping \( c: V \to Col \) from the vertex set \( V \) to a color set \( Col \) (namely, a set of integers), such that no vertex shares the same color with any of its neighbors. 
Suppose that a coloring $c$ is given for a graph $G$. For a vertex $u$, the number of distinct colors assigned to all vertices in $N(u)$, $|\{c(v): v\in N(u)\}|$, is an upper bound on the size of the maximum clique in $G[N(u)]$. We denote this value by $ds_c(u)$ and refer to it as the \textit{saturation} of a vertex \( u \).

\begin{lemma}[Color Reduction]

Given a graph $G=(V,E)$, an integer $k$, an integer $lb$ and a coloring $c$ of $G$, if $\theta(G,k)\ge lb$ and there is a vertex $u\in V$ such that $ds_c(u) \leq lb - 2$, then $\theta(G,k)=\theta(G[V\setminus \{u\}])$. 
\end{lemma}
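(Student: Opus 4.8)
The plan is to reduce this statement to the Exact Clique Reduction (Lemma~\ref{lemma_clique}) in exactly the same way the Degree Reduction was obtained from it, namely by showing that the saturation $ds_c(u)$ upper-bounds $\omega(G[N(u)])$ and then invoking the earlier lemma.

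First I would establish the inequality $\omega(G[N(u)]) \le ds_c(u)$. Let $C \subseteq N(u)$ be a maximum clique of $G[N(u)]$, so that $|C| = \omega(G[N(u)])$. Since every two distinct vertices of a clique are adjacent and $c$ is a proper coloring, no two vertices of $C$ may receive the same color; hence the restriction of $c$ to $C$ is injective and $|C| = |\{c(v) : v \in C\}|$. As $C \subseteq N(u)$, the colors appearing on $C$ form a subset of those appearing on $N(u)$, giving $|\{c(v) : v \in C\}| \le |\{c(v) : v \in N(u)\}| = ds_c(u)$. Chaining these yields $\omega(G[N(u)]) \le ds_c(u)$.

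Second I would feed this bound into the hypothesis. By assumption $ds_c(u) \le lb - 2$, so $\omega(G[N(u)]) \le lb - 2$; together with $\theta(G,k) \ge lb$, the vertex $u$ meets exactly the precondition of Lemma~\ref{lemma_clique}. Applying that lemma then gives $\theta(G,k) = \theta(G[V \setminus \{u\}], k)$, as required.

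I do not anticipate a genuine obstacle here: once the color bound is in place, the conclusion is immediate from the Exact Clique Reduction, precisely paralleling the derivation of the Degree Reduction from $\omega(G[N(u)]) \le d(u)$. The only delicate point is the injectivity of $c$ on a clique, which rests on the fact that all pairs inside a clique are edges and that a proper coloring distinguishes the endpoints of every edge; this is what makes the number of distinct colors in $N(u)$ a valid upper bound rather than merely a heuristic estimate.
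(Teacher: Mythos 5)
Your proposal is correct and matches the paper's own justification exactly: the paper derives the Color Reduction directly from the Exact Clique Reduction via the bound $\omega(G[N(u)]) \leq ds_c(u)$, which you establish by the injectivity of a proper coloring on a clique. No issues.
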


The correctness of Lemma 4 can also be directly derived from Lemma 1, using the fact that \( \omega(G[N(u)]) \leq ds_c(u) \).



Given a coloring $c$ of the graph $G$, suppose \( \theta(G, k) \ge lb \). 
We can exhaustively remove all vertices \( u \in V \) that \( ds_c(u) \leq lb - 2 \) based on this reduction rule. 
The time complexity of this operation can be simply achieved in linear time $O(m)$.
In order to obtain an initial coloring, we use the heuristic coloring algorithm in the state-of-the-art maximum clique algorithm~\cite{li2017minimization}.
Indeed, we can remove more vertices if the colors of the vertices are dynamically adjusted during the reduction process.
This results in a more complex reduction process, with a running time of $O(nm)$.
The details are provided in the appended file.


\paragraph{Triangle Reduction}
For an edge \( \{u, v\} \), if the number of common neighbors of \( u \) and \( v \), $|N(u)\cap N(v)|$, is less than \( \theta(G, k)-2 \), then the size of the clique containing both vertices is smaller than \( \theta(G, k)-2 \). 
This implies the triangle reduction rule.


\begin{lemma}[Triangle Reduction]
Given a graph $G=(V,E)$, an integer $k$ and an integer $lb$, if $ \theta(G,k) \geq lb$ and there is an edge $\{u,v\}\in E$ such that $|N(u)\cap N(v)|  \leq lb - 3$  then $\theta(G,k)=\theta((V,E\setminus \{\{u,v\}\}),k)$. 
\end{lemma}



Suppose that \( \theta(G, k) \ge lb \). The triangle reduction rule indicates that we can exhaustively remove all edges \( \{u,v\} \in E \) that \( |N(u)\cap N(v)| \leq lb - 3\) based on this reduction rule. 
In order to identify such edges, we can list all triangles (3-cliques) in the $G$, which can be done in time \( O(m^{1.5}) \) \cite{latapy2008main}.
Then, the size of $|N(u)\cap N(v)|$ is equal to the number of triangles including edge $\{u,v\}$.
Moreover, if $\{w,u,v\}$ forms a triangle, the removal of edge $\{u,v\}$ also makes edges $\{w,u\}$ or $\{w,v\}$ involved in fewer triangles. 
In other words, we need to keep updating the number of triangles that each edge is involved with. 
This can be also done in linear time $O(3m)$ because each edge can be removed at most once.
In total, the running time of exhaustively removing all edges is $O(m^{1.5})$.

The relationship between triangle reduction and edges is analogous to that between degree reduction and vertices. 
Similarly, we could extend this concept to reductions based on color numbers or maximum cliques of common neighbors.
\begin{lemma}[Triangle Clique Reduction]
Given a graph $G=(V,E)$, an integer $k$ and an integer $lb$, if $ \theta(G,k) \geq lb$ and there is an edge $\{u,v\}\in E$ such that $\omega(G[N(u)\cap N(v)])  \leq  \theta(G,k)-3$, then $\theta(G,k)=\theta((V,E\setminus \{\{u,v\}\}),k)$. 
\end{lemma}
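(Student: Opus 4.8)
The plan is to compare $\theta(G,k)$ with $\theta(G',k)$, where $G' = (V, E \setminus \{\{u,v\}\})$ is the graph obtained by deleting the edge $\{u,v\}$, and to establish equality by proving the two inequalities separately. The direction $\theta(G',k) \le \theta(G,k)$ is immediate from monotonicity: since $G'$ is a subgraph of $G$, for every $S \subseteq V$ we have $\omega(G'[V\setminus S]) \le \omega(G[V\setminus S])$, and minimizing over all $S$ with $|S| \le k$ gives the bound.

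The heart of the argument is a structural bound on any clique that uses the edge $\{u,v\}$. For any induced subgraph $G[W]$ and any clique $C \subseteq W$ containing both $u$ and $v$, every other vertex of $C$ is adjacent to both endpoints, so $C \setminus \{u,v\}$ is a clique contained in $(N(u)\cap N(v)) \cap W$. Hence $|C| \le \omega(G[N(u)\cap N(v)]) + 2 \le (\theta(G,k)-3)+2 = \theta(G,k)-1$. In words, every clique through $\{u,v\}$ is strictly smaller than $\theta(G,k)$; this is exactly why the hypothesis is phrased with the exact value $\theta(G,k)$ rather than the lower bound $lb$.

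For the reverse inequality I would take an optimal interdiction set $S^\ast$ for $G'$, so that $|S^\ast| \le k$ and $\theta(G',k) = \omega(G'[V\setminus S^\ast])$, and show that restoring the edge does not enlarge the clique number, i.e. $\omega(G[V\setminus S^\ast]) = \theta(G',k)$. Let $C$ be a maximum clique of $G[V\setminus S^\ast]$. If $C$ does not contain both $u$ and $v$, then $C$ is already a clique of $G'[V\setminus S^\ast]$, whence $|C| \le \theta(G',k)$. If $C$ contains both $u$ and $v$, the structural bound gives $|C| \le \theta(G,k)-1$. But $S^\ast$ is a feasible interdiction set for $G$, so $\omega(G[V\setminus S^\ast]) \ge \theta(G,k) > \theta(G,k)-1$, ruling out the second case. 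Therefore $\omega(G[V\setminus S^\ast]) = \theta(G',k)$, and combining $\theta(G,k) \le \omega(G[V\setminus S^\ast]) = \theta(G',k)$ with the monotonicity direction yields $\theta(G,k)=\theta(G',k)$.

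The step I expect to be the main obstacle is this reverse direction, where the feasibility lower bound $\omega(G[V\setminus S^\ast]) \ge \theta(G,k)$ must be played against the edge-clique upper bound $\theta(G,k)-1$ to exclude the possibility that reinserting the edge creates a new, larger maximum clique. This is precisely the point at which the reasoning differs from the analogous Triangle Reduction rule: there the looser bound $|N(u)\cap N(v)|$ together with $\theta(G,k) \ge lb$ lets the same argument go through with $lb$ in place of the exact optimum, whereas here the tighter clique-based bound forces the use of $\theta(G,k)$ itself.
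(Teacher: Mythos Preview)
Your proof is correct and follows the same route as the paper's argument for the companion Triangle Reduction lemma (the paper does not spell out a separate proof for this variant): subgraph monotonicity gives one inequality, and for the other you take an optimal $S^\ast$ for $G'$ and use $\omega(G[N(u)\cap N(v)])+2\le\theta(G,k)-1$ together with $\omega(G[V\setminus S^\ast])\ge\theta(G,k)$ to rule out that the maximum clique of $G[V\setminus S^\ast]$ uses the deleted edge. Your remark that the hypothesis $\theta(G,k)\ge lb$ plays no role here is also correct.
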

\begin{lemma}[Triangle Color Reduction]
Given a graph \( G = (V, E) \), an integer \( k \), and an integer $lb$, if $\theta(G,k)\geq lb$ and there is an edge $\{u,v\}\in E$ such that \( |\{c(w):w\in N(u)\cap N(v)\}|\leq\theta(G, k)-3 \), then \( \theta(G, k) = \theta\left( (V, E \setminus \{\{u, v\}\}), k \right) \).
\end{lemma}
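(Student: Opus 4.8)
The plan is to obtain this statement as an immediate corollary of the Triangle Clique Reduction (Lemma~6), mirroring the way the Color Reduction (Lemma~4) was derived from the Exact Clique Reduction (Lemma~1). The single ingredient I need is the standard bound relating a proper coloring to clique sizes: for any vertex set $S \subseteq V$, the maximum clique size $\omega(G[S])$ is at most the number of distinct colors appearing on $S$ under the coloring $c$.

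First I would establish this bound. Let $S = N(u) \cap N(v)$ and let $Q$ be a maximum clique of $G[S]$, so that $|Q| = \omega(G[S])$. Since $c$ is a proper coloring, any two adjacent vertices receive different colors; because every pair of distinct vertices in $Q$ is adjacent, the restriction of $c$ to $Q$ is injective. Hence $Q$ contributes $|Q|$ pairwise distinct colors, all of which lie in $\{c(w) : w \in S\}$, giving $\omega(G[S]) = |Q| \leq |\{c(w) : w \in N(u) \cap N(v)\}|$.

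Next I would combine this inequality with the hypothesis. The assumption $|\{c(w) : w \in N(u) \cap N(v)\}| \leq \theta(G,k) - 3$ together with the bound above yields $\omega(G[N(u) \cap N(v)]) \leq \theta(G,k) - 3$. Since the remaining premise $\theta(G,k) \geq lb$ is shared verbatim with the hypothesis of Lemma~6, both premises of the Triangle Clique Reduction are met for the edge $\{u,v\}$. Applying that lemma then gives $\theta(G,k) = \theta\left((V, E \setminus \{\{u,v\}\}), k\right)$, as claimed.

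I do not anticipate a genuine obstacle here, since the argument is a direct chaining of an upper bound into an already-proven reduction; the only point requiring care is confirming that the coloring bound relies solely on the properness of $c$ (distinct colors on adjacent vertices) and not on any structural property special to $N(u) \cap N(v)$, so that the inequality is valid for the arbitrary induced subgraph $G[S]$.
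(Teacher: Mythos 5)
Your proof is correct and is exactly the route the paper intends: the paper presents the Triangle Color Reduction as the color analogue of the Triangle Clique Reduction, obtained precisely via the bound $\omega(G[S]) \leq |\{c(w) : w \in S\}|$ for a proper coloring $c$ (the same device used to derive the Color Reduction from the Exact Clique Reduction). The chaining of this bound into the hypothesis of the Triangle Clique Reduction is sound, so there is nothing to add.
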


Clearly,  exhaustively removing edges that satisfy the triangle clique or triangle color reduction rules is more computationally expensive than that of triangle reduction. 
On the other hand, after exhaustively applying the degree, color, and triangle reductions, the graph becomes significantly denser, making it hard to remove edges meeting the triangle clique and color reduction rule.
Thus, the above two reductions are only used when a long computational time is allowed.


\paragraph{Interdiction Reduction}
The previous reductions primarily identify vertices that are not in $\mathcal{S}(G,k)$ or not in any maximum clique of $G$. 
In contrast, now we introduce interdiction reduction rules that identify vertices that must be a part of every maximum clique in the remaining graph.


\begin{lemma}[Interdiction Reduction]
Given a graph \( G = (V, E) \), an integer \( k > 0\), and a vertex \( u \in V \), if for any $v \in V\setminus N[u]$, \( \omega(G[N(u)]) - k > \omega(G[N(v)]) \), then \( \theta(G, k) = \theta(G[V \setminus \{u\}], k - 1) \).  
\end{lemma}


In other words, if the largest clique containing vertex $u$ is larger than any maximum clique that excludes $u$ by more than $k$, then in any solution where at most $k$ vertices are removed and $u$ is not among them, the largest remaining clique must include $u$.



We can remove all vertices \( u \in V \) that for any $v \in V\setminus N[u]$, \( \omega(G[N(u)]) - k > \omega(G[N(v)]) \) based on this reduction rule. 
The time complexity of this reduction is \( O\left( n^2 \right) \) when we have already computed \( \omega(G[N(u)]) \) for all vertices \( u\in V \) in the exact clique reduction above.



\paragraph{Domination Reduction}
If all the neighbors of a vertex \( v \) are also neighbors of another vertex \( u \), and \( u \) has additional neighbors that are not connected to \( v \), then removing vertex \( u \) is always more effective than removing vertex \( v \). We call this property \( u \) dominates \( v \).

\begin{lemma}[Domination Reduction]
\label{domred}
Given a graph \( G = (V, E) \), an integer \( k \), and two vertices \( u, v \in V \), 
if \( N(v) \subset N(u) \) or \( N[v] \subset N[u] \), then if there exists a set \( S \in \mathcal{S}(G, k) \) such that \( v \in S \) and \( u \notin S \), then \( (S \setminus \{v\}) \cup \{u\} \in \mathcal{S}(G, k) \).  
\end{lemma}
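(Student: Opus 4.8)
The plan is to show that the swapped set $S' := (S \setminus \{v\}) \cup \{u\}$ is again optimal, by arguing that interdicting $S'$ leaves a remaining graph whose maximum clique is no larger than the one left by $S$. First I would record that $S'$ is feasible: since $v \in S$ and $u \notin S$, the two sets have the same cardinality, so $|S'| = |S| \le k$. Because $\theta(G,k)$ is the minimum of $\omega(G[V \setminus \cdot])$ over all feasible interdiction sets, feasibility of $S'$ already gives $\omega(G[V \setminus S']) \ge \theta(G,k) = \omega(G[V \setminus S])$. Hence it suffices to prove the reverse inequality $\omega(G[V \setminus S']) \le \omega(G[V \setminus S])$; this forces equality and therefore $S' \in \mathcal{S}(G,k)$.

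The heart of the argument is a clique-transplanting step. Writing $R = S \setminus \{v\} = S' \setminus \{u\}$ and $W = V \setminus (R \cup \{u,v\})$, the two remaining graphs are $G[V \setminus S] = G[W \cup \{u\}]$ and $G[V \setminus S'] = G[W \cup \{v\}]$, so they differ only in the single vertex $u$ versus $v$ sitting over the common part $W$. I would take a maximum clique $C$ of $G[W \cup \{v\}]$ and map it into $G[W \cup \{u\}]$ without shrinking it. If $v \notin C$, then $C \subseteq W$ is already a clique of $G[W \cup \{u\}]$. If $v \in C$, write $C = \{v\} \cup C'$ with $C' \subseteq W$; since $C$ is a clique, every vertex of $C'$ is a neighbour of $v$, i.e. $C' \subseteq N(v) \cap W$. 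The domination hypothesis then yields $C' \subseteq N(u)$, so $\{u\} \cup C'$ is a clique of $G[W \cup \{u\}]$ of the same size as $C$. Either way $\omega(G[W \cup \{u\}]) \ge |C| = \omega(G[W \cup \{v\}])$, which is exactly the inequality needed.

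The step I expect to require the most care is verifying $C' \subseteq N(u)$ under both forms of the hypothesis, since the relationship between $u$ and $v$ differs in the two cases. If $N(v) \subset N(u)$ (open neighbourhoods), then $u \notin N(v)$---otherwise $u \in N(v) \subset N(u)$ would give the impossible $u \in N(u)$---so $u,v$ are non-adjacent and $C' \subseteq N(v) \subset N(u)$ directly. If instead $N[v] \subset N[u]$ (closed neighbourhoods), then $v \in N(u)$, so $u,v$ are adjacent and $N(v)$ may contain $u$; here I would use $N(v) \subseteq N[v] \subseteq N[u] = N(u) \cup \{u\}$ together with the fact that $C' \subseteq W$ excludes $u$ to conclude $C' \subseteq N(u)$ again. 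Once this inclusion is secured in both cases, the transplant is legitimate and the proof closes.
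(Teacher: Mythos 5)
Your proof is correct and takes essentially the same route as the paper's: both note that $|S'|=|S|\le k$ gives $\omega(G[V\setminus S'])\ge\theta(G,k)$, and then transplant a maximum clique of $G[V\setminus S']$ into $G[V\setminus S]$ by swapping $v$ for $u$ when $v$ belongs to it. Your explicit verification that the domination hypothesis (in both its open- and closed-neighborhood forms) licenses the swap is a bit more careful than the paper's, which simply asserts that $(C\setminus\{v\})\cup\{u\}$ is a clique, but the argument is the same.
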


We can find all pairs of vertices \( u,v \in V \) that $u$ dominates $v$ based on this reduction rule. 
The time complexity of this reduction is \( O\left( nm \right) \) because the time complexity of determining the domination relationship for a pair of vertices \( (u, v) \) is \( O(d(u) + d(v)) \).




\section{The Reduction-and-Search Algorithm}

\subsection{Lower Bound Estimation}
\label{subsec-lowerbound}
The proposed reduction rules rely on a known lower bound of $\theta(G,k)$. 
In this section, we elaborate on the algorithm to obtain an effective lower bound.

\paragraph{The Linear Program Relaxation}
Given a graph \( G = (V, E) \) and an integer \( k \), it is known that \( \theta(G, k) \) can be computed using the following ILP formulation \cite{furini2019maximum}.
\[
\text{ILP-CIP}(G, k, \mathcal{C})\left\{
\begin{aligned}
\min \quad & y \\
\text{s.t.} 
& \sum_{u\in V} x_u \geq n - k, \\
& \sum_{u \in C} x_u \leq y, \quad \forall C \in \mathcal{C},\\
&y\in \mathbb{R}, x_u \in\{0,1\}, \quad \forall u \in V 
\end{aligned}
\right.
\]
where \( \mathcal{C} \) is the set of all cliques in the graph \( G \). Note that the binary $x_u=0$ means that vertex $u$ is interdicted.

Clearly, we can obtain a lower bound for \( \theta(G, k) \) by relaxing the ILP.
A common relaxation technique is modifying the domain of \( x_u \) to all real values in the range \([0, 1]\). However, it is still difficult to obtain the optimal solution of this LP due to the exponential number of cliques in $G$. Yet it is hard to separate the inequality in polynomial time.

Due to the exponential size of \( |\mathcal{C}| \), in the paper, we provide an alternative relaxation by restricting \( \mathcal{C} \) to only a subset of the maximal cliques in the graph. We note that the following observation holds.

\begin{lemma}
\label{lemma_lowerbound}
Assume $\mathcal{C}= \{C_1,...,C_p\}$ where $C_1,...,C_p$ are cliques in $G$ (,  probably not all cliques in $G$). Then the optimal solution to ILP-CIP$(G,k,\mathcal{C})$ is a lower bound of $\theta(G,k)$.
\end{lemma}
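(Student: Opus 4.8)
The plan is to prove the inequality directly by exhibiting an explicit feasible solution of ILP-CIP$(G,k,\mathcal{C})$ whose objective value equals $\theta(G,k)$. Since the program minimizes $y$, the existence of such a point forces its optimal value to be at most $\theta(G,k)$, which is precisely the assertion that the optimum is a lower bound of $\theta(G,k)$. The advantage of this constructive route is that it is self-contained and does not even require invoking the known fact that the full-clique version of the ILP equals $\theta(G,k)$.

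First I would fix an optimal interdiction set $S^* \in \mathcal{S}(G,k)$, so that $|S^*| \le k$ and $\omega(G[V \setminus S^*]) = \theta(G,k)$. From $S^*$ I would construct the candidate point by setting $x_u = 0$ for every $u \in S^*$ and $x_u = 1$ otherwise, together with $y = \theta(G,k)$. Next I would check feasibility. The budget constraint holds because $\sum_{u \in V} x_u = n - |S^*| \ge n - k$. For the clique constraints, I would observe that for each $C_i \in \mathcal{C}$ the sum $\sum_{u \in C_i} x_u$ counts exactly the vertices of $C_i$ that survive the interdiction, namely $|C_i \cap (V \setminus S^*)|$. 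These surviving vertices still induce a clique in $G[V \setminus S^*]$, so their number is at most $\omega(G[V \setminus S^*]) = \theta(G,k) = y$. Hence every constraint $\sum_{u \in C_i} x_u \le y$ is satisfied, and the point is feasible with objective value $\theta(G,k)$, giving the bound.

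I would then remark that the argument used no special property of the subfamily $\{C_1,\dots,C_p\}$: it goes through for any collection of cliques, because the clique constraints only bound $y$ from below and discarding some of them merely enlarges the feasible region of a minimization problem. This gives the equivalent, even shorter, relaxation viewpoint—ILP-CIP with all cliques equals $\theta(G,k)$, and the restricted program simply drops a subset of its constraints, so it is a relaxation whose optimum cannot exceed that of the full program.

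There is essentially no hard step here; the only point that must be stated with care is that the part of each clique $C_i$ surviving the removal of $S^*$ is itself a clique in $G[V\setminus S^*]$, which is what legitimizes bounding $|C_i \cap (V\setminus S^*)|$ by $\omega(G[V\setminus S^*])$ and thus choosing $y = \theta(G,k)$.
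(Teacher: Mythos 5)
Your proof is correct. The paper does not actually spell out a proof of this lemma in its appendix; its framing (``we provide an alternative relaxation by restricting $\mathcal{C}$ to only a subset of the maximal cliques'') indicates the intended argument is the relaxation viewpoint you mention at the end: the full formulation ILP-CIP$(G,k,\mathcal{C}_{\text{all}})$ is known to compute $\theta(G,k)$ exactly, and deleting clique constraints from a minimization can only shrink the optimal value. Your primary argument is instead a direct construction: take an optimal interdiction set $S^*$, set $x_u=0$ exactly on $S^*$ and $y=\theta(G,k)$, and verify feasibility via the observation that the surviving portion of each clique $C_i$ is a clique of $G[V\setminus S^*]$ and hence has size at most $\omega(G[V\setminus S^*])=y$. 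This buys self-containedness --- you never need the correctness of the exponential-size formulation from the cited prior work --- at the cost of a few extra lines; both routes are sound, and your closing remark correctly identifies that they are two faces of the same fact. One tiny point of care: the construction shows the ILP optimum is at most $\theta(G,k)$, which is exactly the claimed ``lower bound'' direction (the ILP value bounds $\theta$ from below), so no further argument is needed.
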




\paragraph{Restrict $\mathcal{C}$ to Disjoint Cliques}
By Lemma \ref{lemma_lowerbound}, when \( \mathcal{C} \) only includes some mutually disjoint cliques of $G$, the ILP-CIP$(G, k, \mathcal{C})$ is clearly a lower bound of $\theta(G,k)$. We refer to this lower bound as the \textit{disjoint lower bound}.

To compute a disjoint lower bound, we first construct a set \( \mathcal{C} \) which consists of disjoint cliques in $G$ using the simple Algorithm 2 in the appended file. This algorithm uses the simple greedy strategy and has a time complexity of $O(n^2)$.

If $\mathcal{C}$ consists solely of disjoint cliques of $G$, then ILP-CIP$(G, k, \mathcal{C})$ can be computed without the need to invoke an ILP solver.
Denote \( f(\mathcal{C}, y) \) as the minimum number of vertices that need to be removed such that the size of the largest clique in \( \mathcal{C} \) is less than or equal to \( y \), i.e., \(f(\mathcal{C}, y) = \sum_{C \in \mathcal{C}} \max(0, |C| - y)\).
The smallest \( y \) such that \( f(\mathcal{C}, y) \leq k \) is the result of ILP-CIP$(G, k, \mathcal{C})$. Because \( f(\mathcal{C}, y) \) is non-decreasing as \( y \) increases, allowing us to use binary search to determine \( y \),  ILP-CIP$(G, k, \mathcal{C})$ is obtained in $O(|\mathcal{C}| \log n)$. 
In sum, the time to obtain a disjoint lower bound is $O(n^2+|\mathcal{C}| \log n)=O(n^2)$ when $\mathcal{C}$ only contains disjoint cliques.

\paragraph{Relaxation to Bipartite Cliques}
Furthermore, supposing  that \( \mathcal{C}\) can be partitioned into $\mathcal{C}_1$ and $\mathcal{C}_2$, where \( \mathcal{C}_1 \) includes some mutually disjoint cliques of $G$, and  \( \mathcal{C}_2 \) also includes some mutually disjoint cliques of $G$,
then ILP-CIP$(G, k, \mathcal{C})$ is also a lower bound of $\theta(G,k)$ by Lemma \ref{lemma_lowerbound}.  We refer to this lower bound as the \textit{bipartite lower bound}.

In order to obtain $\mathcal{C}_1$ and $\mathcal{C}_2$, we simply run the greedy Algorithm 2 in the appended file twice. 
Then, we use a flow-based algorithm to compute ILP-CIP$(G, k, \mathcal{C}=\mathcal{C}_1\cup \mathcal{C}_2)$.

For notational convenience, let us denote \( f'(\mathcal{C}=\mathcal{C}_1\cup \mathcal{C}_2, y) \) as the minimum number of vertices that need to be removed such that the size of the largest clique in \( \mathcal{C} \) is less than or equal to \( y \), i.e. \(f'(\mathcal{C}, y) = \sum_{C \in \mathcal{C}_1} \max(0, |C| - y) + \sum_{C \in \mathcal{C}_2} \max(0, |C| - y) - h(\mathcal{C}_1, \mathcal{C}_2, y)\). Here, \(h(\mathcal{C}_1, \mathcal{C}_2, y) \) denotes the maximum number of vertices that can be removed to simultaneously reduce the size of cliques in \( \mathcal{C}_1 \) and \( \mathcal{C}_2 \) to at least \( y \).

\begin{figure}[h]
    \centering
    \includegraphics[width=0.85\linewidth]{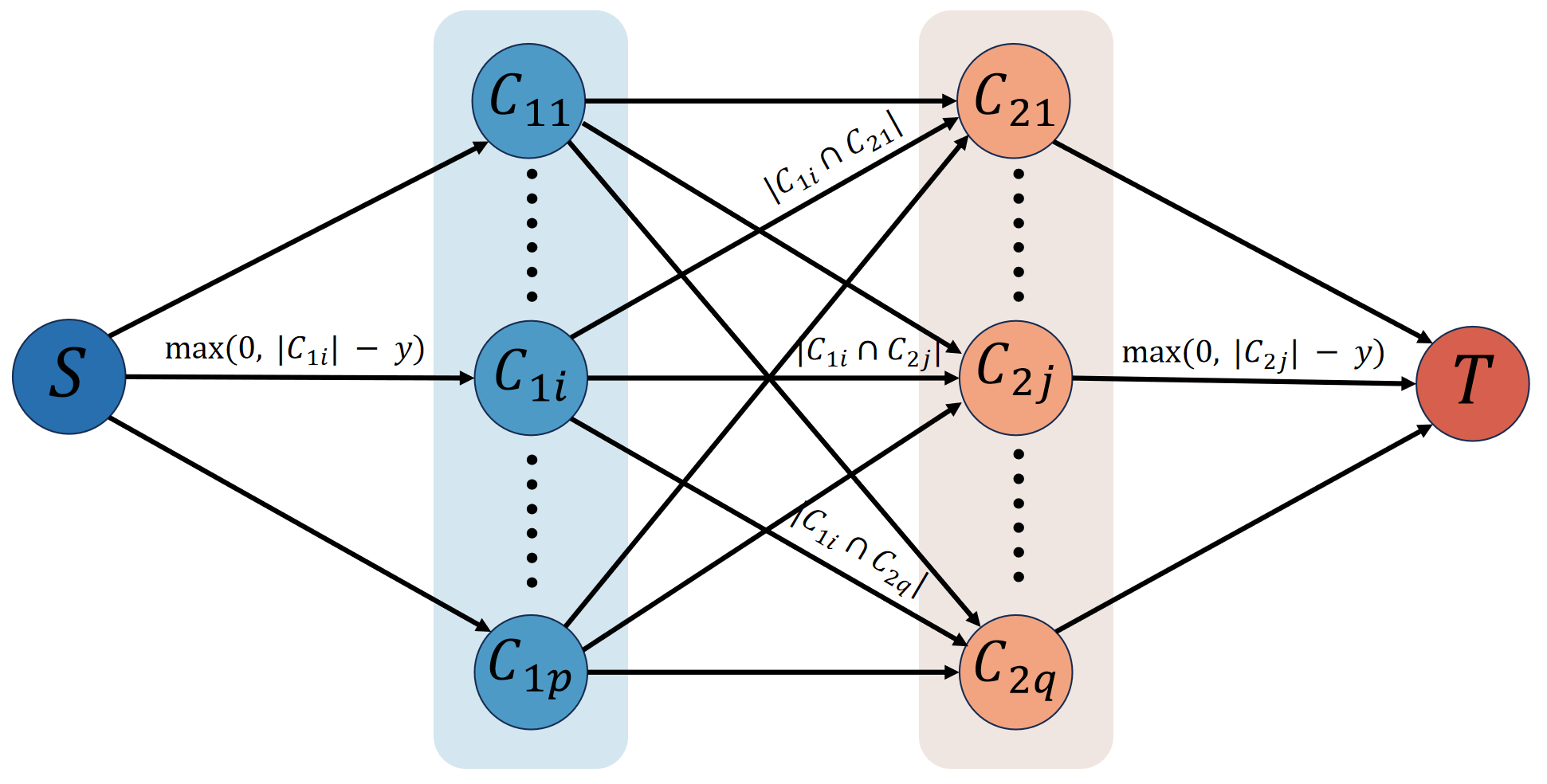}
    \caption{Structure of the network flow model, the value on the side represents the flow size.}
    \label{fig:wll}
\end{figure}

Given sets $\mathcal{C}_1$ and $\mathcal{C}_2$, and an integer $y$, the \(h(\mathcal{C}_1, \mathcal{C}_2, y) \) is computed using a maximum flow  algorithm.
We first build a flow network as shown in Figure~\ref{fig:wll}. 
For each clique $C$ in \( \mathcal{C}_1 \) and \( \mathcal{C}_2 \), we create a node in the network representing the $C$. 
We also create an additional source node $S$ and a target node $T$.
Each node representing a \( C_1 \in \mathcal{C}_1 \) connects to the source $S$ with capacity \( \max(0, |C_1| - y) \), and each node representing \( C_2 \in \mathcal{C}_2 \) connects to the target $T$ with capacity \( \max(0, |C_2| - y) \). 
Edges between \( C_1\in \mathcal{C}_1 \) and \( C_2\in\mathcal{C}_2 \) have capacity \( |C_1 \cap C_2| \). 
The maximum $(S,T)$-flow in this flow network is equal to \(h(\mathcal{C}_1, \mathcal{C}_2, y) \). 
The total time of building the flow network and computing the maximum flow is \( O(n^3) \).

As \(h(\mathcal{C}_1, \mathcal{C}_2, y) \) can be computed efficiently,  it is easy to compute ILP-CIP$(G, k, \mathcal{C}=\mathcal{C}_1\cup \mathcal{C}_2)$.
Similar to the disjoint lower bound, the smallest \( y \) such that \( f'(\mathcal{C}=\mathcal{C}_1\cup \mathcal{C}_2, y) \leq k \) is the result of ILP-CIP$(G, k, \mathcal{C}=\mathcal{C}_1\cup \mathcal{C}_2)$. 
Thereby, the total running time of computing the bipartite lower bound is \( O(n^3\log n )\).

\subsection{The Reduction-based Preprocess}
\begin{figure}[h]
    \centering
    \includegraphics[width=0.9\linewidth]{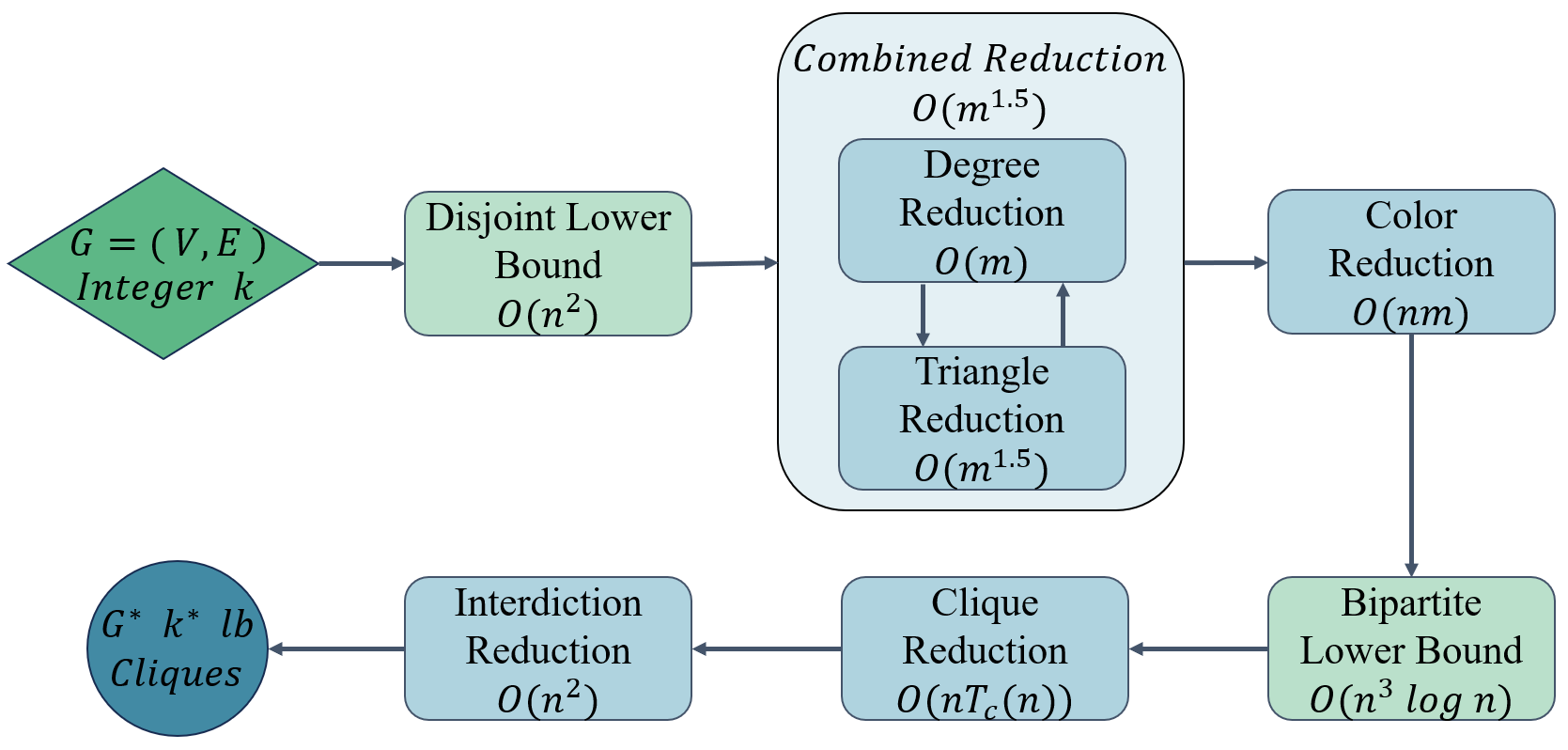}
    \caption{Flowchart of the reduction-based preprocess,  time complexities are given to each step.}
    \label{fig:lct}
\end{figure}

Firstly, these reduction rules require a valid lower bound of $\theta(G,k)$.
Hence, the initial step is to use the simple disjoint cliques relaxation method to obtain a lower bound.

Afterwards, we call the degree reduction and triangle reduction to reduce the $(G,k)$. 
The two reductions can be done in an intertwined manner. 
Specifically, the deletion of an edge decreases the degree of its connected vertices, while the deletion of a vertex affects the triangles that an edge involves.
Hence, we maintain the vertex degrees and triangles simultaneously and update them when a vertex (and its incident edges) or an edge is removed from the graph.
The total time complexity of this combined reduction step remains \( O(m^{1.5}) \).

When the graph can no longer be reduced by degree or triangle reduction, we apply color reduction, followed by clique reduction. 
It is known that the computation of clique reduction is relatively expensive.
To achieve a better trade-off between reduction time and effectiveness, we use the bipartite lower bound before the clique reduction.
Intuitively, a tighter lower bound can help reduce more vertices in the clique reduction. 

So far, the reductions are used in an increasing order of their complexity. 
Exceptions are given to the interdiction reduction.
The interdiction reduction step follows the clique reduction because it asks for  $\omega(G[N(u)])$ for each $u\in V$.
Finally, the pre-processing outputs graph $G$ and budget $k$.




\subsection{Integrating the Reduction and Branch-and-cut}
\begin{algorithm}[h]
    \caption{RECIP}
    \label{alg:algorithm}
    \textbf{Input}: Graph $G=(V,E)$, an integer $k$\\
    \textbf{Output}: $\theta(G,k)$
    \begin{algorithmic}[1] 
        \STATE $G,k,lb,Cliques\gets preprocess(G,k)$
        \STATE $ans \gets branc\_and\_cut(G,k,lb,Cliques)$
        \STATE \textbf{return} $ans$
    \end{algorithmic}
\end{algorithm}

Our algorithm, RECIP, consists of two main components. The first component is the reduction-based preprocessing step described earlier, which aims to reduce the size of the graph. The second component is a branch-and-cut algorithm \cite{mitchell2002branch} to solve the ILP-CIP formulation.

\paragraph{Initial Linear Program} 

We use ILP-CIP$(G,k,\mathcal{C})$ to build the linear program, where $\mathcal{C}$ are cliques obtained by the reduction process.
Additionally, based on the domination reduction rule (Lemma~\ref{domred}), we find all dominance relations between any $u$ and $v\in V$. 
Then we add $x_u\leq x_v$ when $x_u$ dominates $x_v$  to tighten the LP.

\paragraph{Speciation Oracle}
When a solution $(x^*,y^*)$ is obtained, the branch-and-cut algorithm asks a separation oracle to decide if there is a clique $C$ such that the inequality $\sum_{u\in C}x^*_u\le y$ is not satisfied. 
If so, a new inequality of the form $\sum_{u\in C}x_u\le y$ is added to continue the search.
This separation oracle is equal to the maximum clique problem in the graph \( G[V^*] \) where $V^*=\{u\in V: x^*_u=1\}$ \cite{furini2019maximum}.
Again, we use the algorithm in \cite{chang2019efficient} to find maximum cliques. 




\section{Experiments}

In this section, we evaluate the performance of our algorithm. 
As mentioned, there are two existing methods, the general BILP solver~\cite{becker2017bilevel,tang2016class,fischetti2017new} and CLIQUE-INTER \cite{furini2019maximum}, for solving the CIP.
According to the experiments in \cite{furini2019maximum}, CLIQUE-INTER is faster than the BILP solver by several orders of magnitude on all tested cases.
Therefore, we mainly compare our RECIP with CLIQUE-INTER in the experiments. 


\paragraph{Experimental Setup}
All experiments were conducted on a machine equipped with an AMD R9-7940HX CPU and 16GB of RAM, running Ubuntu 22.04. 
The codes were written in C++ and compiled with GCC version 11.4.0 using the -O2 optimization flag. 
The IBM CPLEX solver version 12.7 was used as the underlying solver for the integer linear program. 
All algorithms run in single-threaded mode with the default settings of CPLEX. 

\paragraph{Real-world Dataset}
Our experiments use 124 real-world networks sourced from the SNAP database and the Network Repository\cite{snap,networkrepository}  \footnote{The dataset can be downloaded from \url{https://lcs.ios.ac.cn/~caisw/graphs.html}.}.
The dataset covers a variety of categories, including social networks, technological networks, biological networks, and more.
These graphs can be as large as 1.9 million vertices and 5 million edges.
We leave a detailed description in the appended file.


\subsection{Results on Real-world Network Graph}


We compare the performance of the two algorithms under four different budget values (\( k \in \{\lceil 0.005|V|\rceil, \lceil 0.01|V|\rceil, \lceil 0.02|V|\rceil, \lceil 0.05|V|\rceil\} \)) with a runtime limit of 600 seconds.

\begin{figure*}[h]
    \centering
    \begin{subfigure}[b]{0.24\linewidth}
        \centering
        \includegraphics[width=\linewidth]{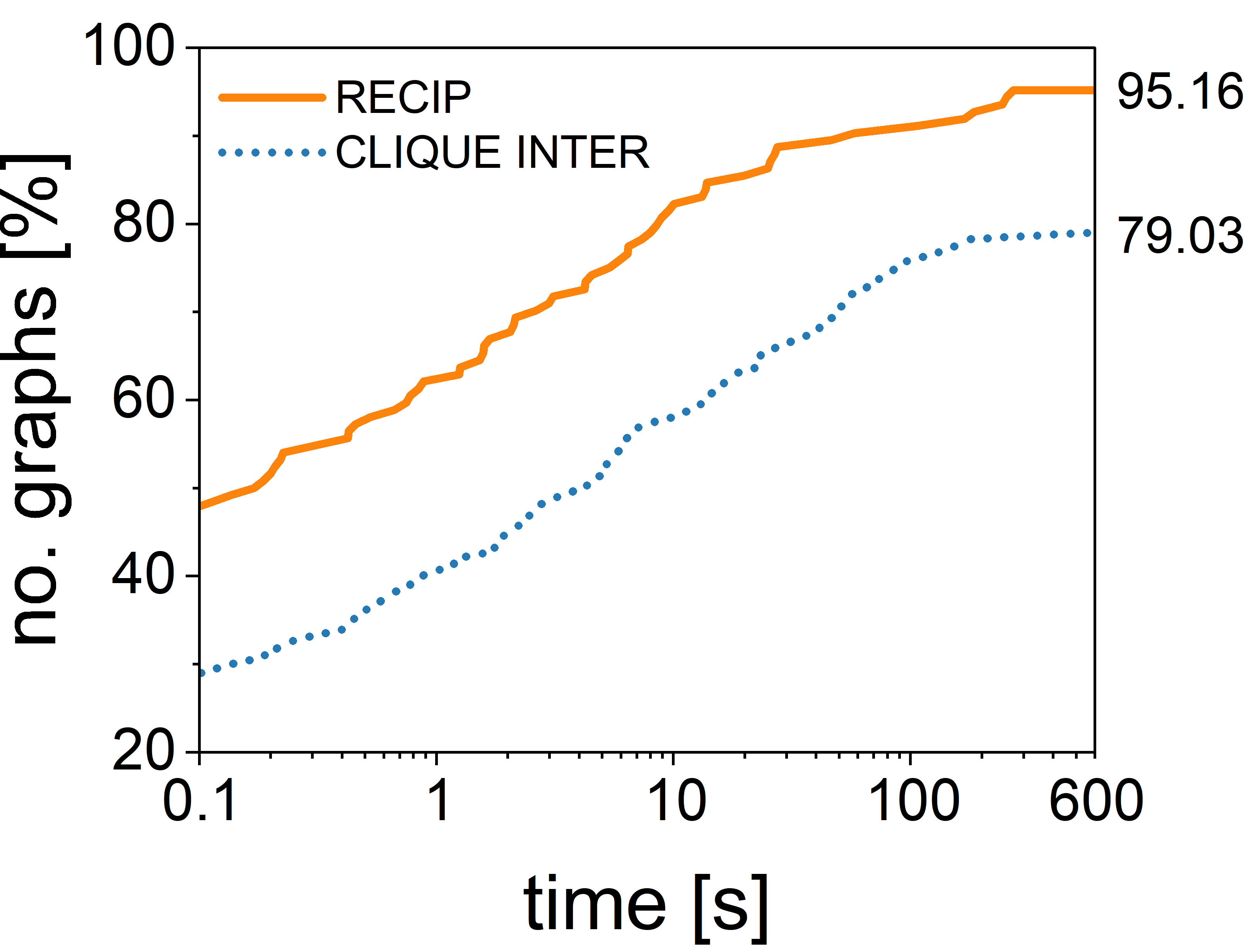}
        \caption{$k =\lceil 0.005|V|\rceil$}
        \label{fig:real-world01}
    \end{subfigure}
    \begin{subfigure}[b]{0.24\linewidth}
        \centering
        \includegraphics[width=\linewidth]{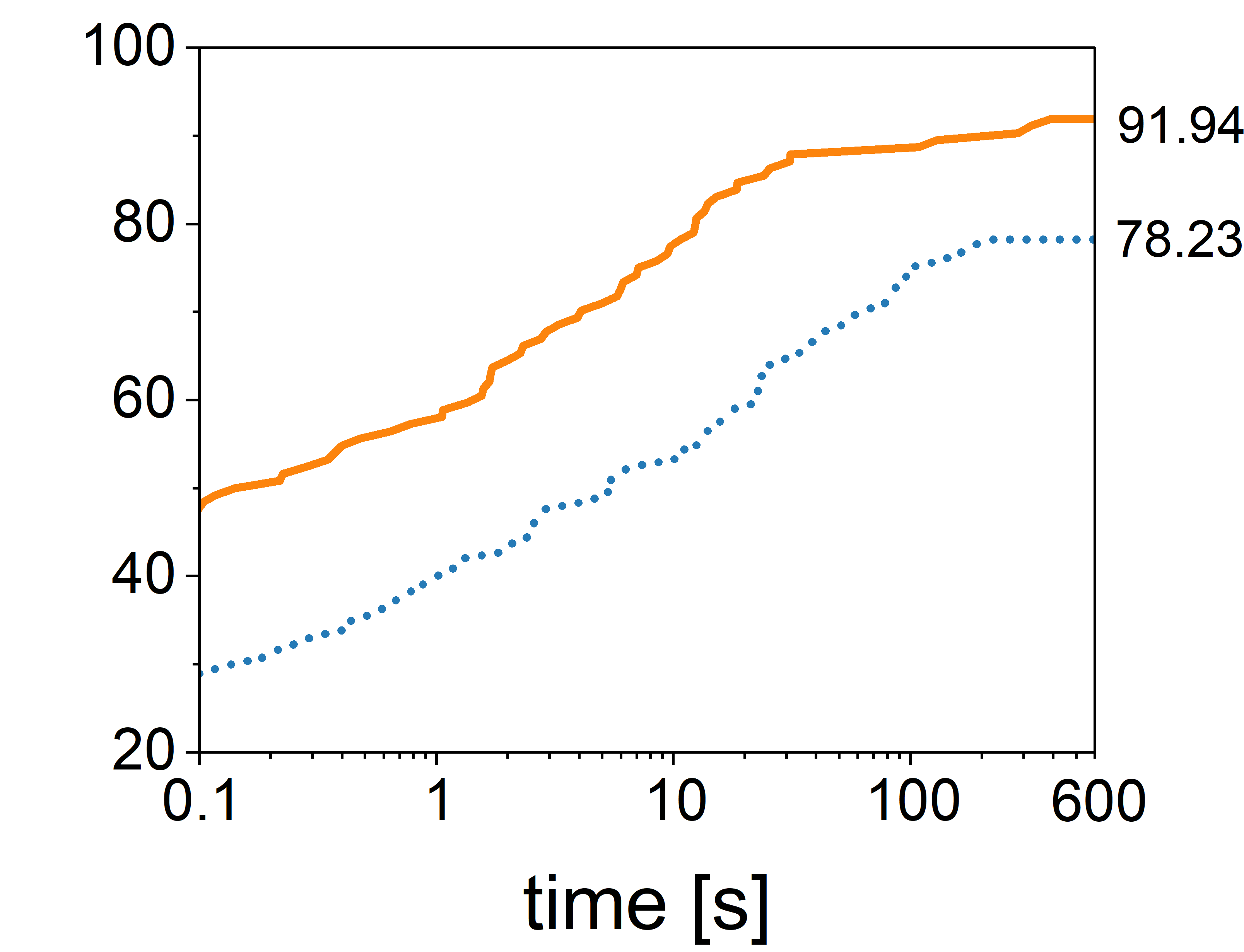}
        \caption{$k =\lceil 0.01|V|\rceil$}
        \label{fig:real-world02}
    \end{subfigure}
    \begin{subfigure}[b]{0.24\linewidth}
        \centering
        \includegraphics[width=\linewidth]{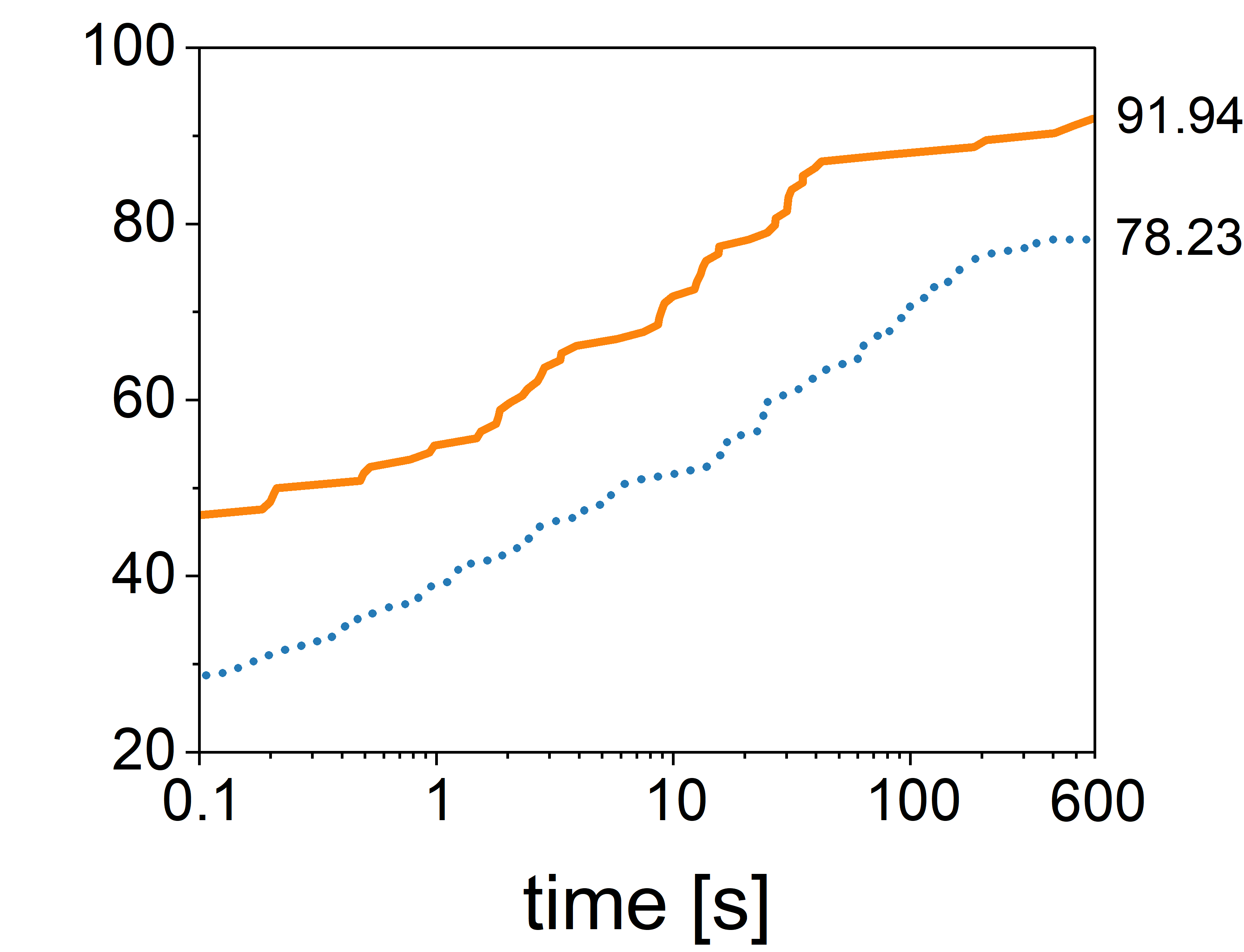}
        \caption{$k =\lceil 0.02|V|\rceil$}
        \label{fig:real-world03}
    \end{subfigure}
    \begin{subfigure}[b]{0.24\linewidth}
        \centering
        \includegraphics[width=\linewidth]{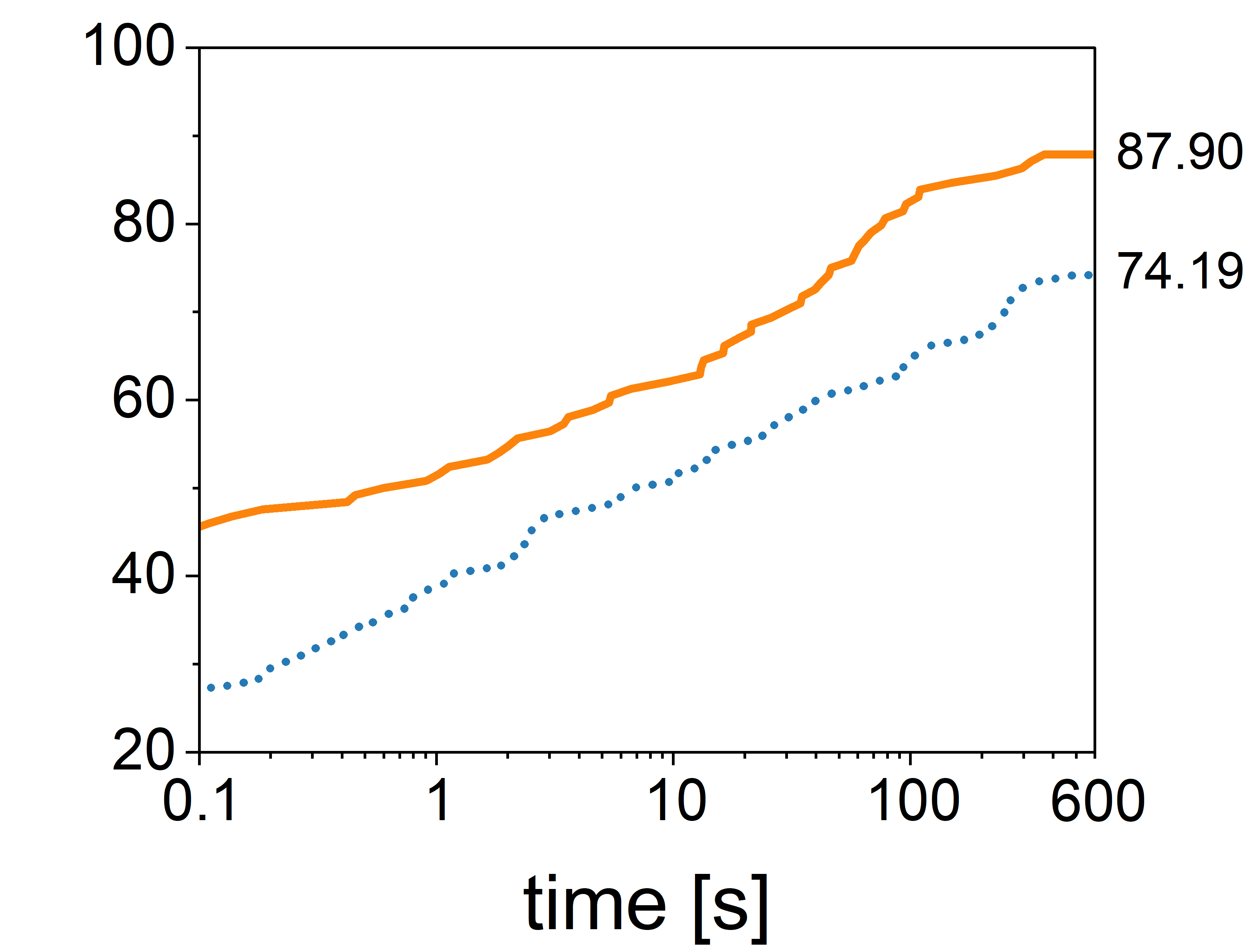}
        \caption{$k =\lceil 0.05|V|\rceil$}
        \label{fig:real-world04}
    \end{subfigure}
    \caption{The proportion of instances solved by both algorithms within the time range of 0.1 to 600 seconds on real-world network graphs under different \( k \) values.}
    \label{fig:real-world}
\end{figure*}

Figure~\ref{fig:real-world} shows the number of instances solved within different time limits for each budget value \( k \). 
For every $k$, RECIP consistently outperforms CLIQUE-INTER across all time frames, solving over 10\% more instances after 600s. 
Furthermore, for a fixed time frame, the number of instances solved by both algorithms decreases as \( k \) increases.
This trend matches the observation that fewer vertices and edges can be reduced when $k$ increases.

\begin{figure}[h]
    \centering
    \includegraphics[width=0.85\linewidth]{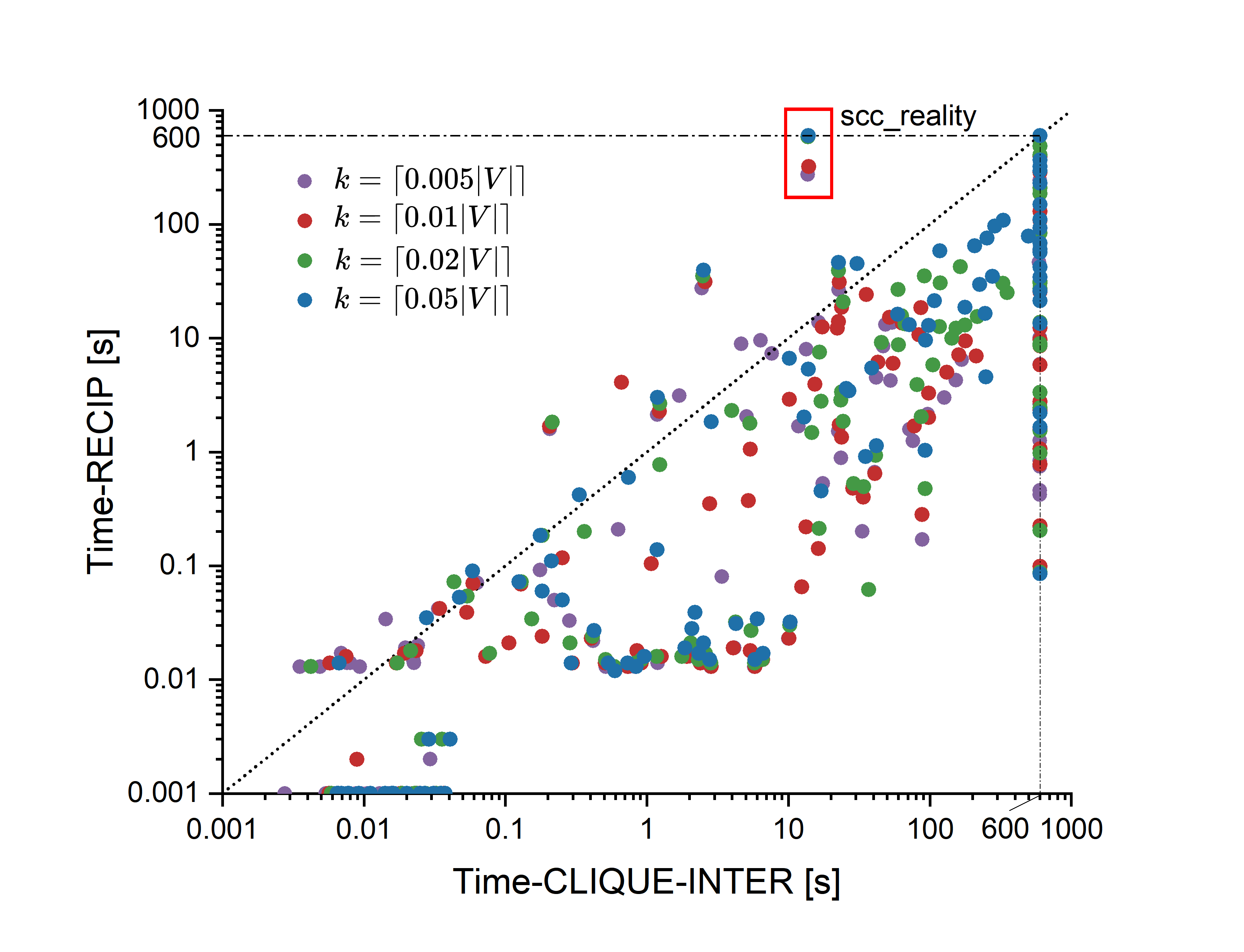}
    \caption{Runtime of both algorithms for each instance, with instances exceeding the 600-second time limit plotted at the boundary. For clarity, a dashed line representing \( x = y \) is added.}
    \label{fig:time-time}
\end{figure}

In the scatter plot in Figure~\ref{fig:time-time},  each point represents an instance, i.e., a pair $(G,k)$, where the axes represent the runtime of the two algorithms. 
Points below the diagonal line indicate cases where RECIP is faster than CLIQUE-INTER, while points above indicate the opposite.  
There are 410 instances that RECIP outperforms CLIQUE-INTER, accounting for 89.7\% of the total instances. 
We note that there are 39 instances that both algorithms cannot solve within the time limit.
Furthermore, we observed that in some relatively denser graphs, the reduction process can be highly time-consuming. For example, in graph \texttt{scc\_reality}, where \(|V| = 6809\) and \(|E| = 4714485\), when \( k = \lceil 0.05n \rceil \), the clique reduction step takes 816 seconds but only removes 5 vertices. 
In contrast, the subsequent branch-and-cut process completes in just 1.6 seconds. 

\subsection{Effectiveness of Reduction Rules}



Now, we give a detailed break-up analysis of the reduction algorithms.
For each instance, we record the graph size after each reduction step, i.e., degree, triangle, color, and exact clique reductions, and we also record the size of the remaining graph.
Due to space limitations, we report the 10 graphs with the most vertices from those that can be solved by RECIP within 600s and where not all vertices are removed during the reduction process in Figure \ref{fig:red}. 
Clearly, the simple degree reduction removes at least half of the vertices for the majority of graphs. 
This is within our expectation as these graphs are sparse.
In contrast, the interdiction reduction step removes few vertices.
Nevertheless, the remaining reduction steps still play an important role in reducing around 20\% of the number of vertices, especially when \( k = \lceil 0.05|V| \rceil \).
Increasing the value of \( k \) makes the reduction process less effective, resulting in a greater number of remaining vertices. 

\begin{figure*}[t]
    \centering
    \begin{subfigure}[b]{0.22\linewidth}
        \centering
        \includegraphics[angle=90, width=\linewidth]{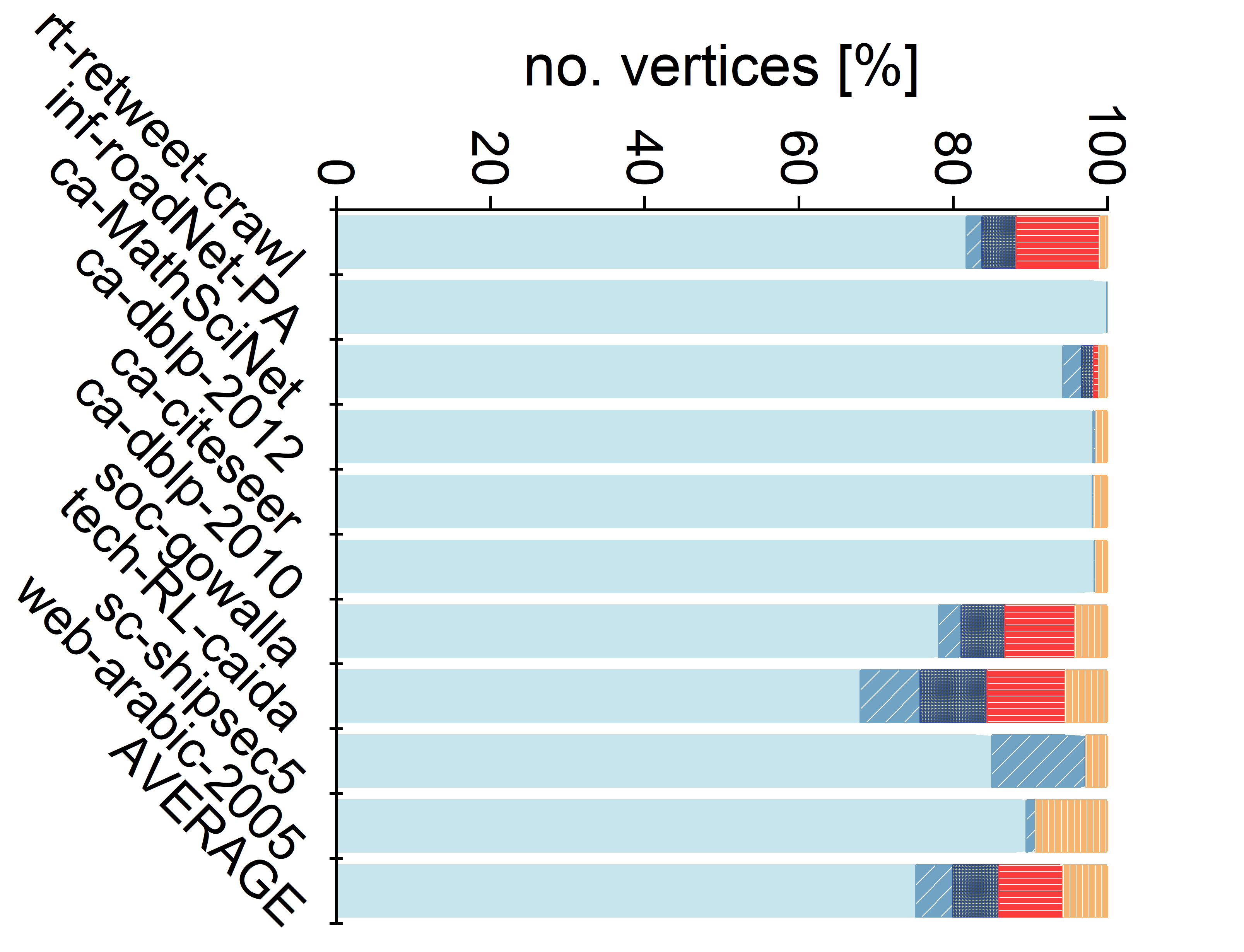}
        \caption{$k =\lceil 0.005|V|\rceil$}
        \label{fig:real-world1}
    \end{subfigure}
    \vspace{0.1cm}
    \begin{subfigure}[b]{0.22\linewidth}
        \centering
        \includegraphics[angle=90,width=\linewidth]{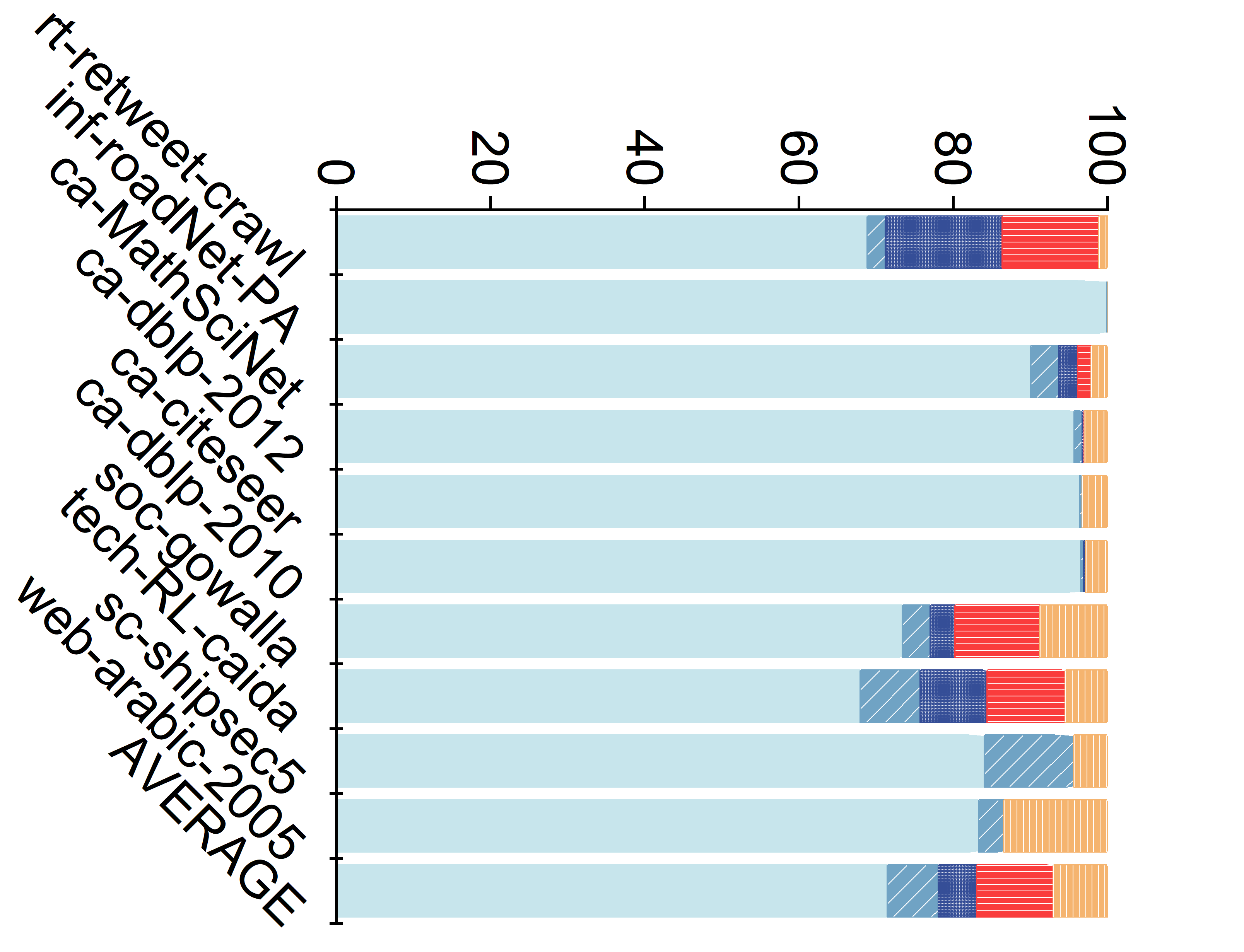}
        \caption{$k =\lceil 0.01|V|\rceil$}
        \label{fig:real-world2}
    \end{subfigure}
    \vspace{0.1cm}
    \begin{subfigure}[b]{0.22\linewidth}
        \centering
        \includegraphics[angle=90,width=\linewidth]{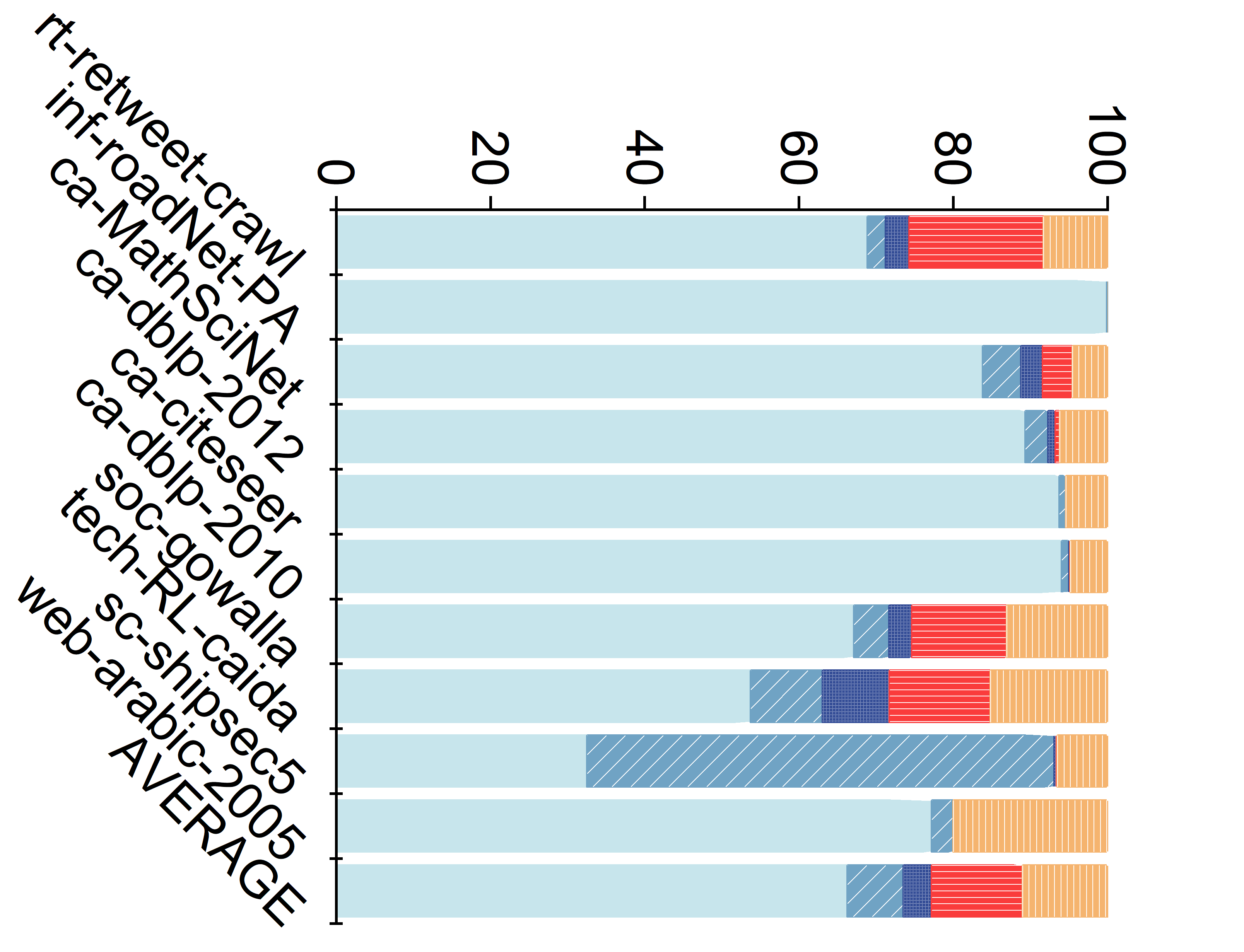}
        \caption{$k =\lceil 0.02|V|\rceil$}
        \label{fig:real-world3}
    \end{subfigure}
    \vspace{0.1cm}
    \begin{subfigure}[b]{0.22\linewidth}
        \centering
        \includegraphics[angle=90, width=\linewidth]{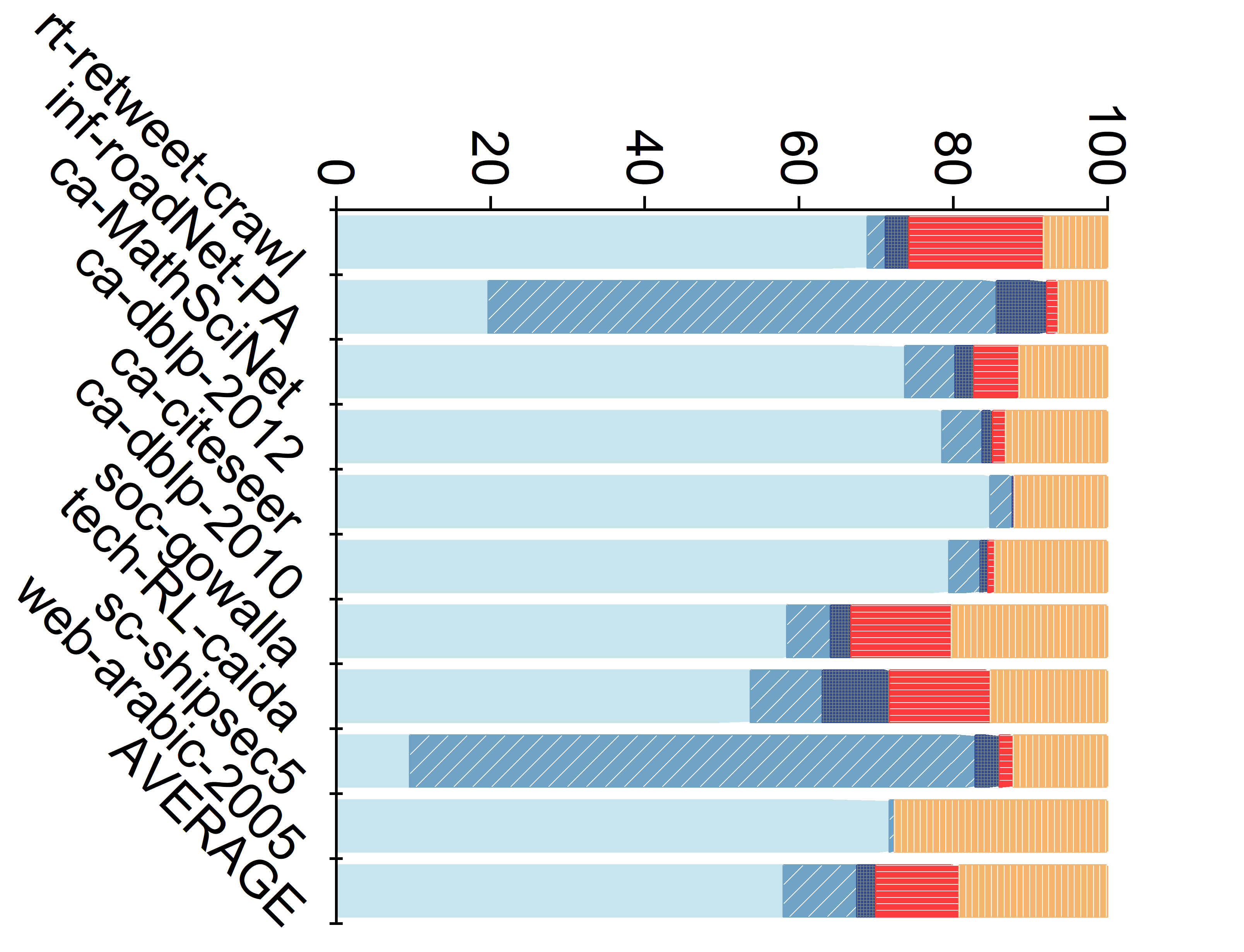}
        \caption{$k =\lceil 0.05|V|\rceil$}
        \label{fig:real-world4}
    \end{subfigure}
    \vspace{0.1cm}
    \begin{subfigure}[b]{0.44\linewidth}
        \centering
        \includegraphics[width=\linewidth]{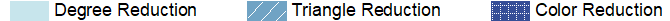}
        \label{fig:real-world5}
    \end{subfigure}
    \begin{subfigure}[b]{0.44\linewidth}
        \centering
        \includegraphics[width=\linewidth]{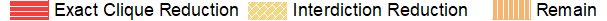}
        \label{fig:real-world6}
    \end{subfigure}
    
    \caption{Reduction proportion at each step for 10 selected graphs and the average, with different \( k \) values.}
    \label{fig:red}
\end{figure*}

\subsection{Analysis with Lower Bounds}

To demonstrate the effectiveness of our lower bound algorithm, we record the disjoint lower bound and bipartite lower bound for the 10 graphs. We present the results for \( k = 0.005|V| \) in  Table~\ref{tab:lb} and leave the rest in the appended files.
\begin{table}[htbp]
  \centering
  \caption{Disjoint and bipartite lower bound for 10 selected graphs with $k =\lceil 0.005|V|\rceil$.}
  \begin{tabular}{lrrrrr}
    \toprule
          & \multicolumn{2}{c}{disjoint} & \multicolumn{2}{c}{bipartite} & \multicolumn{1}{c}{$\theta(G,k)$} \\
    \cmidrule(r){2-3} \cmidrule(l){4-5}
          & \multicolumn{1}{c}{lb} & \multicolumn{1}{c}{time[s]} & \multicolumn{1}{c}{lb} & \multicolumn{1}{c}{time[s]} &  \\
    \midrule
    rt-retweet-crawl & 3     & 0.31  & 3     & 238.10 & 3      \\
    inf-roadNet-PA   & 3     & 0.134 & 3     & 0.201   & 3      \\
    ca-MathSciNet    & 7     & 0.067 & 7     & 2.54    & 7      \\
    ca-dblp-2012     & 15    & 0.105 & 15    & 0.506   & 15     \\
    ca-citeseer      & 25    & 0.044 & 25    & 0.317   & 25     \\
    ca-dblp-2010     & 19    & 0.039 & 19    & 0.245   & 19     \\
    soc-gowalla      & 7     & 0.056 & 8     & 23.913  & 8      \\
    tech-RL-caida    & 4     & 0.043 & 4     & 20.212  & 5      \\
    sc-shipsec5      & 21    & 0.084 & 21    & 1.005   & 21     \\
    web-arabic-2005  & 49    & 0.051 & 49    & 4.711   & 49     \\
    \bottomrule
  \end{tabular}
  \label{tab:lb}
\end{table}


We observe that the gap between our lower bound algorithm and \( \theta(G, k) \) is very small. The bipartite lower bound obtains a tighter bound for soc-gowalla with a sacrifice of running time.
Across all the 455 instances where $\theta(G,k)$ is known, the sum of the gaps between $\theta(G,k)$ and the bipartite lower bound is only 24. 
This demonstrates the effectiveness of our lower-bound algorithms.

\subsection{Analysis with Random Graphs}  

To further investigate the scalability of the algorithm, we create 5 groups of Erdős-Rényi random \( G(n, p) \) graphs, each group has a vertex number $n$($|V|$) selected from $\{50, 75, 100, 125, 150\}$.
In each group, we generate 11 graphs of edge densities (\( p \in \{0.1, 0.2, 0.3, 0.4, 0.5, 0.6, 0.7, 0.8, 0.95, 0.98\} \))\footnote{The edge density $p$ is $|E|/\binom{|V|}{2}$}, then we evaluate four different budget values (\( k \in \{\lceil 0.05|V|\rceil, \lceil 0.1|V|\rceil, \lceil 0.2|V|\rceil, \lceil 0.4|V|\rceil\} \)) for each of these graphs.
The cut-off time is still 600 seconds.

\begin{figure}[htbp]
    \begin{subfigure}[t]{0.45\linewidth}
        \centering
        \includegraphics[width=\linewidth]{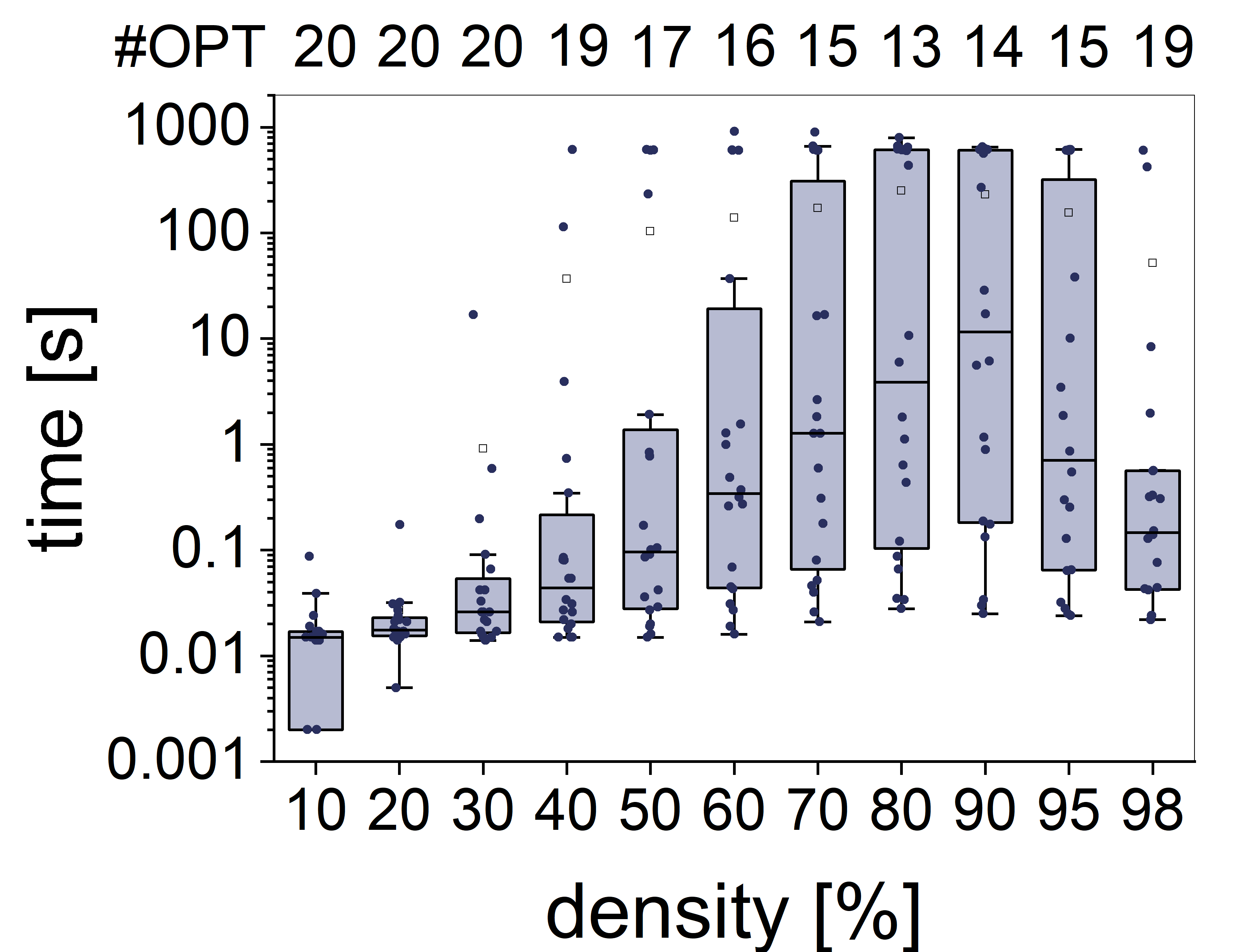}
        \caption{Runtime for each instance, with the number of completed instances within the 600-second time limit indicated at the top.}
        \label{fig:rand1}
    \end{subfigure}
    \hspace{0.02\linewidth}
    \begin{subfigure}[t]{0.45\linewidth}
        \centering
        \includegraphics[width=\linewidth]{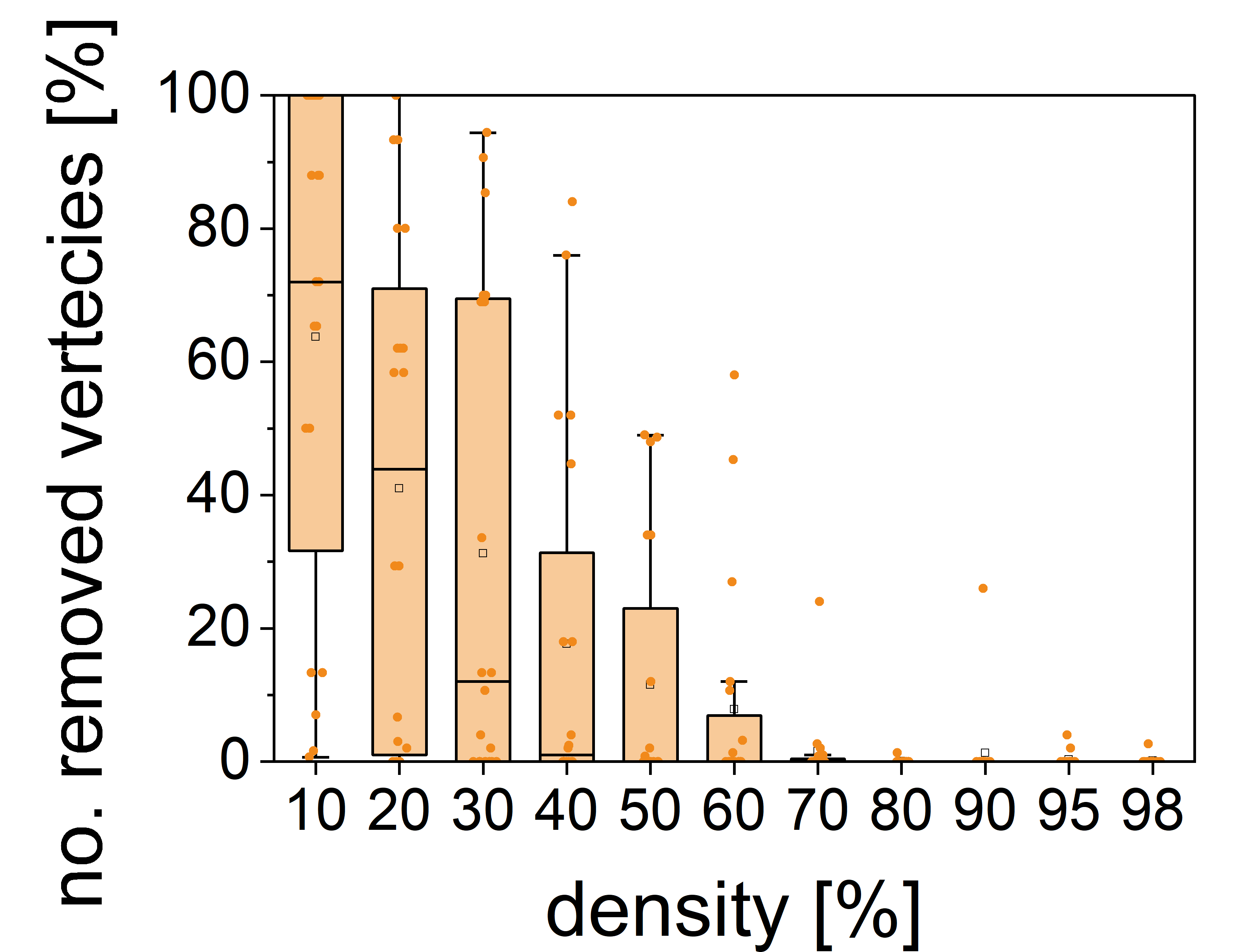}
        \caption{The percentage of vertices removed for each instance during the reduction process.}
        \label{fig:rand2}
    \end{subfigure}
    \caption{Results on random graphs, grouped by density.}
    \label{fig:rand}
\end{figure}
In order to know how graph density affects the reduction algorithm, in Figure~\ref{fig:rand}, we present the runtime and the percentage of vertices removed for different graph densities $p$.
The average runtime reaches its peak when $p$ is in $[0.8,0.9]$, and few vertices can be removed when $p\ge 0.7$.
We guess that in these dense graphs, the reduction rules in RECIP are less effective, making it perform worse than CLIQUE-INTER on such instances.  Additional evaluation on another dense benchmark from the Second DIMACS Implementation Challenge\footnote{This dataset can be downloaded from \url{https://iridia.ulb.ac.be/~fmascia/maximum_clique/}} confirms this trend, with RECIP and CLIQUE-INTER competing closely with each other.

\section{Conclusion and Future Work}

In this paper, we introduce new reduction rules and propose a novel algorithm, RECIP. Experiments demonstrate that our algorithm outperforms the current state-of-the-art method by an order of magnitude in terms of efficiency on large-scale datasets. The results highlight the effectiveness of our reduction techniques, with some limitations observed in dense graphs.
Thus, it is possible to further enhance its overall performance by exploring new reduction rules for dense graphs and improving the branch-and-cut algorithm in the future.
\section{Appendix}
\subsection{Missing Pseudo-code}
\begin{algorithm}[h]
    \caption{Color Reduction}
    \label{alg:graph_reduction}
    \textbf{Input}: graph \( G = (V, E) \) , an integer \( lb \), a coloring \( c \) of the graph \\
    \textbf{Output}: graph after reduction \( G'\) 
    \begin{algorithmic}[1]
        \FORALL{\( u \in V \)}
            \STATE Compute saturation degree \( ds_c(u) \)
        \ENDFOR
        \STATE Initialize an empty queue \( Q \)
        \FORALL{\( u \in V \) such that \( ds_c(u) \leq lb - 2 \)}
            \STATE Add \( u \) to \( Q \)
        \ENDFOR
        \WHILE{\( Q \) is not empty}
            \STATE \( x \gets \text{front of } Q \)
            \STATE Pop \( x \) from \( Q \)
            \IF{\( ds_c(x) \leq lb - 2 \)}
                \STATE Remove \( x \) from \( G \)
            \ELSE
                \STATE Update \( c(x) \) to the smallest valid color
            \ENDIF
            \FORALL{\( y \in N(x) \)}
                \STATE Update \( ds_c(y) \)
                \IF{\( ds_c(y) \leq lb - 2 \) \textbf{or} \( c(y) \) can be decreased}
                    \IF{\( y \notin Q \)}
                        \STATE Add \( y \) to \( Q \)
                    \ENDIF
                \ENDIF
            \ENDFOR
        \ENDWHILE
        \STATE \textbf{return} \( G \)
    \end{algorithmic}
\end{algorithm}
Algorithm~\ref{alg:graph_reduction} provides the detailed implementation of the color reduction step in the paper, enabling dynamic color modifications. Since each entry into the queue either removes a vertex or decreases the color of a vertex, the total time complexity is \( O(nm) \).

\begin{algorithm}[h]
    \caption{Get Disjoint Cliques}
    \label{alg:get_disjoint_cliques}
    \textbf{Input}: Graph \( G = (V, E) \)\\
    \textbf{Output}: a set of disjoint cliques \( \mathcal{C} \) 
    \begin{algorithmic}[1]
        \STATE \( \mathcal{C} \gets \emptyset \)
            \FORALL{\( x \in V \)}
                \STATE \(B=true\)
                \FORALL{\(C\in \mathcal{C}\)}
                \IF{\( x \) is connected to all vertices in \( C \)}
                    \STATE \( C \gets C \cup \{x\} \)
                    \STATE \(B=false\)
                    \STATE break;
                \ENDIF
                \ENDFOR
                \IF{\(B\)}
                   \STATE \( \mathcal{C} \gets \mathcal{C} \cup \{\{x\}\} \) 
                \ENDIF
            \ENDFOR
        \STATE \textbf{return} \( \mathcal{C} \)
    \end{algorithmic}
\end{algorithm}
Algorithm~\ref{alg:get_disjoint_cliques} describes the process of finding disjoint cliques as part of the lower bound computation in the paper. To generate different types of disjoint cliques, the order in which vertices in \( V \) are accessed can be modified. The total time complexity of this algorithm is \( O(n^2) \).

\subsection{Missing Proofs}
\begin{lemma}[Exact Clique Reduction]
\label{lemma_clique}
Given a graph $G=(V,E)$, an integer $k$, and an integer $lb$, if $\theta(G,k)\ge lb$ and there is a vertex $u\in V$ such that $\omega(G[N(u)]) \leq lb-2$, then $\theta(G,k)=\theta(G[V\setminus \{u\}])$. 
\end{lemma}
\begin{proof}

Given a graph $G=(V,E)$, an integer $k$, and an integer $lb\leq\theta(G,k)$. There is a vertex $u\in V$ such that $\omega(G[N(u)]) \leq lb-2$.

Let \( G' = G[V \setminus \{u\}] \) and \( S \in \mathcal{S}(G', k) \). Since \( G' \) is a subgraph of \( G \), it follows that \( \theta(G, k) \geq \theta(G', k) \geq \theta(G,k) -1 \). 
Additionally, because \( S \in \mathcal{S}(G', k) \), we have \( \omega(G[V \setminus \{u\} \setminus S]) = \theta(G', k) \). Due to \( \omega(G[N(u)\setminus S)])\leq\omega(G[N(u)]) \leq lb-2 \leq \theta(G,k)-2\leq\theta(G', k)-1\), we have \( \omega(G[V \setminus S]) = \omega(G[V \setminus \{u\} \setminus S]) = \theta(G', k) \). As a result, \( \theta(G, k) \leq \omega(G[V \setminus S]) = \theta(G', k) \). Since \( \theta(G, k) \geq \theta(G', k) \) was already established, we conclude that \(\theta(G, k) = \theta(G', k)\).
\end{proof}

\begin{lemma}
    Given a graph $G=(V,E)$, an integer $k$, and two vertices $u,v \in V$, if $\omega(G[N(u)]) <\omega(G[N(v)])$, then $\omega(G[N(v)\setminus \{u\}])=\omega(G[N(v)])$.
\end{lemma}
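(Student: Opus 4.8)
The plan is to prove the two inequalities $\omega(G[N(v)\setminus\{u\}]) \le \omega(G[N(v)])$ and $\omega(G[N(v)\setminus\{u\}]) \ge \omega(G[N(v)])$ separately. The first is immediate: since $N(v)\setminus\{u\}\subseteq N(v)$, the graph $G[N(v)\setminus\{u\}]$ is an induced subgraph of $G[N(v)]$, so its maximum clique cannot be larger. If $u\notin N(v)$ then $N(v)\setminus\{u\}=N(v)$ and there is nothing to prove, so I would assume $u\in N(v)$, i.e.\ $\{u,v\}\in E$. The integer $k$ plays no role here. The whole task therefore reduces to exhibiting a maximum clique of $G[N(v)]$ that avoids $u$.

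First I would argue by contradiction: suppose every maximum clique of $G[N(v)]$ contains $u$, and fix one such clique $C$, writing $t=\omega(G[N(v)])=|C|$. The key step is a vertex swap that replaces $u$ by $v$. Define $C' = (C\setminus\{u\})\cup\{v\}$. I would verify that $C'$ is again a clique of size $t$: the vertices of $C\setminus\{u\}$ are pairwise adjacent, being a subset of the clique $C$, and each of them lies in $N(v)$ since $C\subseteq N(v)$, hence is adjacent to $v$; so adjoining $v$ keeps the set a clique, and $|C'|=|C|=t$ because $u\neq v$ and $v\notin C$.

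Next I would show that $C'\subseteq N(u)$. Indeed, every vertex of $C\setminus\{u\}$ is adjacent to $u$, again because $C$ is a clique containing $u$, and $v\in N(u)$ since we assumed $\{u,v\}\in E$. Consequently $C'$ is a clique of size $t$ inside $G[N(u)]$, giving $\omega(G[N(u)])\ge t = \omega(G[N(v)])$. This contradicts the hypothesis $\omega(G[N(u)])<\omega(G[N(v)])$. Hence some maximum clique of $G[N(v)]$ omits $u$; such a clique lies entirely in $N(v)\setminus\{u\}$, yielding $\omega(G[N(v)\setminus\{u\}])\ge t$, and combined with the first inequality this gives the claimed equality.

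The main obstacle, and really the only point requiring care, is justifying that the swapped set $C'$ lands inside $N(u)$ while remaining a clique of the same size. This hinges on two adjacency facts that must both be checked: that $v$ is adjacent to every vertex of $C\setminus\{u\}$, which follows from $C\subseteq N(v)$, and that $v\in N(u)$, which is exactly where the assumption $\{u,v\}\in E$ enters. Once these are in hand, the contradiction with the strict inequality $\omega(G[N(u)])<\omega(G[N(v)])$ is immediate.
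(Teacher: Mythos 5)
Your proof is correct and follows essentially the same route as the paper's: assume for contradiction that every maximum clique $C$ of $G[N(v)]$ contains $u$, perform the swap $C' = (C\setminus\{u\})\cup\{v\}$, and observe that $C'$ is a clique of the same size contained in $N(u)$, contradicting $\omega(G[N(u)]) < \omega(G[N(v)])$. Your version is in fact slightly more careful than the paper's, since you explicitly dispose of the case $u\notin N(v)$ and verify $v\notin C$, but the core argument is identical.
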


\begin{proof}

Given a graph $G=(V,E)$ and an integer $k$. There are two vertices $u,v \in V$ such that $\omega(G[N(u)]) <\omega(G[N(v)])$.

Let \( C \subseteq N(v) \) be a clique with \( |C| = \omega(G[N(v)]) \). Assume for contradiction that \( \omega(G[N(v) \setminus \{u\}]) \neq \omega(G[N(v)]) \). This implies \( u \in C \). 

Now consider the clique \( C' = (C \setminus \{u\}) \cup \{v\} \). We have \( C' \subseteq N(u) \) and \( C' \) is also a clique, so \(|C'| \leq \omega(G[N(u)])\). However, the size of \( C' \) is $|C'| = |C| - 1 + 1 = \omega(G[N(v)]) > \omega(G[N(u)]).$ and this is contradictory to \(|C'| \leq \omega(G[N(u)])\).
Therefore, \( \omega(G[N(v) \setminus \{u\}]) = \omega(G[N(v)]) \) must hold.
\end{proof}

\begin{lemma}[Triangle Reduction]
Given a graph $G=(V,E)$, an integer $k$ and an integer $lb$, if $ \theta(G,k) \geq lb$ and there is an edge $\{u,v\}\in E$ such that $|N(u)\cap N(v)|  \leq lb - 3$  then $\theta(G,k)=\theta((V,E\setminus \{\{u,v\}\}),k)$. 
\end{lemma}

\begin{proof}
Given a graph $G=(V,E)$, an integer $k$, and an integer $lb\leq\theta(G,k)$. There is an edge $\{u,v\}\in E$ such that $|N(u)\cap N(v)|  \leq lb - 3$.

Let \( G' = (V,E\setminus \{\{u,v\}\}) \) and \( S \in \mathcal{S}(G', k) \). Since \( G' \) is a subgraph of \( G \), it follows that \( \theta(G, k) \geq \theta(G’, k) \geq \theta(G,k) -1 \). 

Additionally, because \( S \in \mathcal{S}(G', k) \), we have \( \omega(G'[V \setminus S]) = \theta(G', k) \). We also have $\omega(G[V\setminus S]) \geq \theta(G,k) \geq lb$. Due to $\omega(G[(N(u)\cap N(v))\setminus S])+2\leq\omega(G[N(u)\cap N(v)])+2\leq|N(u)+N(v)|+2\leq lb$, we have $\omega(G[V\setminus S])=\omega(G'[V\setminus S]) = \theta(G', k)$
As a result, \( \theta(G, k) \leq \omega(G[V \setminus S]) = \theta(G', k) \). 
Since \( \theta(G, k) \geq \theta(G', k) \) was already established, we conclude that \(\theta(G, k) = \theta(G', k)\).
\end{proof}

\begin{lemma}[Interdiction Reduction]
Given a graph \( G = (V, E) \), an integer \( k > 0\), and a vertex \( u \in V \), if for any $v \in V\setminus N[u]$, \( \omega(G[N(u)]) - k > \omega(G[N(v)]) \), then \( \theta(G, k) = \theta(G[V \setminus \{u\}], k - 1) \).  
\end{lemma}

\begin{proof}
Given a graph \( G = (V, E) \) and an integer \( k > 0 \),and a vertex \( u \in V \) such that for every \( v \in V \setminus N[u],
\omega(G[N(u)]) - k > \omega(G[N(v)]).\)

Let \( S \in \mathcal{S}(G, k) \) and \( w \in S \). It follows that \( \omega(G[V \setminus S]) = \theta(G, k) \).

Now, if \( u \notin S \), since for any \( v \in V \setminus N[u] \), \( \omega(G[N(u)]) - k > \omega(G[N(v)]) \), we have\(\omega(G[V \setminus S]) = \omega(G[V \setminus (S \cup \{u\})]) + 1 \geq \omega(G[V \setminus ((S \setminus \{w\}) \cup \{u\})]).\)Let \( S' = (S \setminus \{w\}) \cup \{u\} \). We observe that \( |S'| = |S| \), thus \( |S'| \leq k \), and we also have\(\omega(G[V \setminus S']) \leq \omega(G[V \setminus S]) = \theta(G, k).\)Therefore, \( S' \in \mathcal{S}(G, k) \).

Thus, there exists \( S'' \in \mathcal{S}(G, k) \) such that \( u \in S'' \). As a result, we conclude that\(\theta(G, k) = \theta(G[V \setminus \{u\}], k - 1).\)




\end{proof}

\begin{lemma}[Domination Reduction]
Given a graph \( G = (V, E) \), an integer \( k \), and two vertices \( u, v \in V \), 
if \( N(v) \subset N(u) \) or \( N[v] \subset N[u] \), then if there exists a set \( S \in \mathcal{S}(G, k) \) such that \( v \in S \) and \( u \notin S \), then \( (S \setminus \{v\}) \cup \{u\} \in \mathcal{S}(G, k) \).  
\end{lemma}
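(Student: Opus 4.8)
The plan is to exhibit that $S' := (S \setminus \{v\}) \cup \{u\}$ is again an optimal interdiction set. First I would check feasibility: since $v \in S$ and $u \notin S$, the swap neither adds nor removes an element, so $|S'| = |S| \le k$. Because $S \in \mathcal{S}(G,k)$ achieves the minimum $\theta(G,k)$ and $S'$ is feasible, we automatically have $\omega(G[V \setminus S']) \ge \theta(G,k) = \omega(G[V \setminus S])$. Hence the whole proof reduces to establishing the reverse inequality $\omega(G[V \setminus S']) \le \omega(G[V \setminus S])$; combining the two forces equality and thus $S' \in \mathcal{S}(G,k)$.

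For the reverse inequality I would compare the two residual graphs through the set identity $V \setminus S' = \big((V \setminus S) \cup \{v\}\big) \setminus \{u\}$, which follows from $v \in S$, $u \notin S$ and $u \ne v$. Let $C'$ be a maximum clique of $G[V \setminus S']$, so $|C'| = \omega(G[V \setminus S'])$; note $u \notin C'$ because $u \notin V \setminus S'$. I would then split on whether $v \in C'$. If $v \notin C'$, then $C' \subseteq (V \setminus S') \setminus \{v\} \subseteq V \setminus S$, so $C'$ is already a clique of $G[V \setminus S]$ and $|C'| \le \omega(G[V \setminus S])$.

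The substantive case is $v \in C'$, where I would perform a clique-swap: set $C'' := (C' \setminus \{v\}) \cup \{u\}$. Since every $w \in C' \setminus \{v\}$ is adjacent to $v$, we have $w \in N(v)$; the domination hypothesis then gives $w \in N(u)$, so $u$ is adjacent to every vertex of $C' \setminus \{v\}$, making $C''$ a clique of the same size $|C''| = |C'|$ (using $u \notin C'$). Moreover $C' \setminus \{v\} \subseteq (V \setminus S) \setminus \{u\}$ and $u \in V \setminus S$, so $C'' \subseteq V \setminus S$ and hence $|C'| = |C''| \le \omega(G[V \setminus S])$. Either way $\omega(G[V \setminus S']) \le \omega(G[V \setminus S])$, completing the argument.

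I expect the main obstacle to be the swap step, and in particular verifying that the inclusion $N(v) \subseteq N(u)$ needed there holds under both stated forms of domination. For $N(v) \subset N(u)$ it is immediate, while for $N[v] \subset N[u]$ I would argue that any $w \in C' \setminus \{v\}$ lies in $N(v) \subseteq N[v] \subseteq N[u]$ and satisfies $w \ne u$ (as $w \in V \setminus S'$ but $u \notin V \setminus S'$), hence $w \in N(u)$. Handling these two cases uniformly, together with the bookkeeping that $u \notin C'$ and $u \ne v$, is the only delicate part; the remainder is routine set manipulation.
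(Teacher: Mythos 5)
Your proposal is correct and follows essentially the same route as the paper's own proof: establish $|S'|=|S|\le k$ to get $\omega(G[V\setminus S'])\ge\theta(G,k)$, then take a maximum clique $C'$ of $G[V\setminus S']$ and case-split on whether $v\in C'$, using the clique-swap $(C'\setminus\{v\})\cup\{u\}$ into $V\setminus S$ to obtain the reverse inequality. Your treatment is in fact slightly more careful than the paper's on the bookkeeping details ($u\notin C'$, $w\ne u$, and the two forms of the domination hypothesis), but the argument is the same.
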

\begin{proof}
    Given a graph \( G = (V, E) \), an integer \( k \), and two vertices \( u, v \in V \), such that \( N(v) \subset N(u) \) or \( N[v] \subset N[u] \). There exists a \(S \in \mathcal{S}(G, k) \) such that \( v \in S \) and \( u \notin S \).
    It follows that \(\omega(G[V\setminus S])=\theta(G,k)\). Let \(S'=(S \setminus \{v\}) \cup \{u\}, G' = G[V\setminus S']\). Due to \(|(S \setminus \{v\}) \cup \{u\}=|S|\leq k\),  we have $\omega(G') \geq \theta(G,k)$.
    There exists a \(C \subseteq V\setminus S'\) is a clique and \(|C|=\omega(G')\)
    If $v\in C$, because \( N(v) \subset N(u) \) or \( N[v] \subset N[u] \), we denote $C'=(C\setminus\{v\})\cup\{u\}$. It follows that \(C'\) is a clique and $C'\subseteq V\setminus S$. We have $|C'|=|C|+1-1=\omega(G')$, thereby $\omega(G')\leq\omega(G[V\setminus S])=\theta(G,k)$.
    Otherwise, if $v\notin C$, thereby $C\subseteq V\setminus S$, so $\omega(G')\leq\omega(G[V\setminus S])=\theta(G,k)$.
    
    Since $\omega(G') \geq \theta(G,k)$ was already established, we conclude that \(\omega(G') = \theta(G,k)\) and \( (S \setminus \{v\}) \cup \{u\} \in \mathcal{S}(G, k) \).  
\end{proof}


\subsection{Experiment on dimacs2}
On the graphs with $|V|=200$ from DIMACS2, we compare the performance of the two algorithms under two different budget values (\( k \in \{20,40 \} \)) with a runtime limit of 600 seconds.

Table~\ref{tab:performance} shows that among all 32 test cases, both algorithms timed out (T.L.) in 12 cases, RECIP performed better in 9 cases, and CLIQUE-INTER was superior in 11 cases. Notably, there was one instance where RECIP timed out while CLIQUE-INTER did not. In all experiments, RECIP failed to reduce any vertices, further demonstrating its limitations on dense graphs.

\begin{table}[htbp]
  \centering
  \caption{Compare with CLIQUE-INTER on the graphs with $|V|=200$ from DIMACS2}
  \label{tab:performance}
  \begin{tabular}{lrrrrrr}
    \toprule
    \multirow{2}{*}{Graph} & \multirow{2}{*}{$|E|$} & \multirow{2}{*}{$\omega(G)$} & \multicolumn{2}{c}{RECIP(s)} & \multicolumn{2}{c}{CLIQUE-INTER(s)} \\
    \cmidrule(lr){4-5} \cmidrule(lr){6-7}
    & & & k=20 & k=40 & k=20 & k=40 \\
    \midrule
    brock200\_1.clq & 14834 & 21 & 519.998 & T.L. & 554.504 & T.L. \\
    brock200\_2.clq & 9876 & 12 & 0.685 & T.L. & 0.878 & T.L. \\
    brock200\_3.clq & 12048 & 15 & 2.711 & 113.830 & 1.139 & 165.138 \\
    brock200\_4.clq & 13089 & 17 & T.L. & T.L. & T.L. & T.L. \\
    c-fat200-1.clq & 1534 & 12 & 0.184 & 0.375 & 0.374 & 0.064 \\
    c-fat200-2.clq & 3235 & 24 & 0.070 & 0.319 & 0.335 & 0.026 \\
    c-fat200-5.clq & 8473 & 58 & 0.118 & 0.276 & 0.328 & 0.107 \\
    san200\_0.7\_1.clq & 13930 & 30 & 4.855 & 58.181 & 3.829 & 78.826 \\
    san200\_0.7\_2.clq & 13930 & 18 & 10.089 & 86.751 & 13.728 & 76.735 \\
    san200\_0.9\_1.clq & 17910 & 70 & 2.291 & 8.959 & 0.341 & 13.551 \\
    san200\_0.9\_2.clq & 17910 & 60 & 36.369 & T.L. & 2.184 & T.L. \\
    san200\_0.9\_3.clq & 17910 & 44 & T.L. & T.L. & T.L. & T.L. \\
    sanr200\_0.7.clq & 13868 & 18 & 43.684 & T.L. & 26.187 & T.L. \\
    sanr200\_0.9.clq & 17863 & 62 & T.L. & T.L. & T.L. & T.L. \\
    gen200\_p0.9\_44.clq & 17910 & 44 & T.L. & T.L. & 382.658 & T.L. \\
    gen200\_p0.9\_55.clq & 17910 & 55 & 250.194 & T.L. & 35.709 & T.L. \\
    \bottomrule
  \end{tabular}%
\end{table}%

\section*{Acknowledgment}
We thank the authors of CLIQUE-INTER for kindly providing the source code.
This work was supported by the National Natural Science Foundation of China under grant 62372093. 

\printbibliography[title=References]

\end{document}